\documentclass[sn-mathphys-num,pdflatex]{sn-jnl}

\usepackage{fix-cm}
\usepackage{graphicx}
\usepackage{multirow}
\usepackage{amsmath,amssymb,amsfonts}
\usepackage{amsthm}
\usepackage{mathrsfs}
\usepackage[title]{appendix}
\usepackage{xcolor}
\usepackage{textcomp}
\usepackage{manyfoot}
\usepackage{booktabs}
\usepackage{listings}
\usepackage{tabularray}
\usepackage{overpic}
\usepackage{mathtools}
\usepackage{comment}

\renewcommand{\le}{\leqslant}

\renewcommand{\ge}{\geqslant}

\newcommand{\cC}{\mathcal{C}}

\newcommand{\cS}{\mathcal{S}}

\newcommand{\Z}{\mathbb{Z}}

\newcommand{\eqdef}{\triangleq}

\theoremstyle{thmstyleone}
\newtheorem{theorem}{Theorem}
\newtheorem{proposition}[theorem]{Proposition}
\newtheorem{corollary}[theorem]{Corollary}
\newtheorem{lemma}[theorem]{Lemma}

\theoremstyle{thmstyletwo}

\newtheorem{remark}{Remark}

\theoremstyle{thmstylethree}
\newtheorem{definition}{Definition}
\hypersetup{bookmarksdepth=subsubsection}
\DeclareMathOperator{\supp}{supp}

\newif\ifshowproofs
\showproofstrue 

\begin{document}

\title[Polynomial Constructions and Deletion-Ball Geometry for Multiset Deletion Codes
]{Polynomial Constructions and Deletion-Ball Geometry for Multiset Deletion Codes
}

\author[1]{\fnm{Avraham} \sur{Kreindel}}
\email{avrahamkreindel@gmail.com}

\author[2]{\fnm{Isaac} \sur{Barouch Essayag}}
\email{isaac.es@migal.org.il}

\author[3]{\fnm{Aryeh~Lev} \sur{Zabokritskiy (Yohananov)}}
\email{yuhanalev@telhai.ac.il}

\affil[1]{%
  \orgdiv{Department of Computer Science},
  \orgname{Reichman University},
  \orgaddress{\city{Herzliya}, \country{Israel}}
}

\affil[2]{%
  \orgdiv{Research Assistant},
  \orgname{MIGAL -- Galilee Research Institute/Tel-Hai University of Kiryat Shmona and the Galilee},
  \orgaddress{\city{Kiryat Shmona}, \country{Israel}}
}

\affil[3]{%
  \orgdiv{Department of Computer Science},
  \orgname{MIGAL -- Galilee Research Institute/Tel-Hai University of Kiryat Shmona and the Galilee},
  \orgaddress{\city{Kiryat Shmona}, \country{Israel}}
}

\abstract{
We study error-correcting codes in the space $\cS_{n,q}$ of length-$n$ multisets over a $q$-ary alphabet under the deletion metric, motivated by permutation channels in which ordering is completely lost and errors act only on symbol multiplicities. We develop two complementary directions. First, we present polynomial Sidon-type constructions over finite fields, in both projective and affine forms, yielding multiset $t$-deletion-correcting codes in the regime $t<q$ with redundancy $t+O(1)$, independent of the blocklength $n$. Second, we develop a geometric analysis of deletion balls in $\cS_{n,q}$. Using difference-vector representations together with a diagonal reduction of the relevant generating functions, we derive exact generating-function expressions for individual deletion-ball sizes, exact formulas for the number of ordered pairs of multisets at a fixed distance $m$, and consequently for the average ball size. We prove that radius-$r$ deletion balls are minimized at extreme multisets and maximized at the most balanced multisets, giving a formal global characterization of extremal centers in $\cS_{n,q}$. We further relate the maximal-ball value to the ideal difference set $S_{q-1}(r,r)$ through boundary truncation, obtaining explicit closed forms for $q=2$ and $q=3$. These geometric results lead to volume-based bounds on code size, including sphere-packing upper bounds, a boundary-aware analysis of code--anticode arguments, and Gilbert--Varshamov-type lower bounds governed by exact average ball sizes. For fixed $q$ and $t$, the resulting average-ball lower bound matches the interior-difference-set scale asymptotically.
}

\keywords{Deletion channels, multiset codes, generating functions, Sidon sets, sphere packing}

\pacs[MSC Classification]{94B25, 94B05, 05A15}

\maketitle

\section{Introduction}
\label{sec:intro}

Communication models in which the order of transmitted symbols is unreliable or
irrelevant arise naturally in a variety of modern systems. Unlike classical
sequence-based channels, these models preserve only the \emph{multiset} of
symbols, discarding positional information entirely. This abstraction was
introduced and systematically developed in a series of works by
Kova\v{c}evi\'{c} and collaborators
\cite{KovacevicVukobratovic2013,KovacevicTan2018,KovacevicDuplication2019},
who showed that many impairments of permutation channels may be expressed as
operations on multiplicity vectors inside a discrete simplex. In such settings,
errors do not alter symbol positions but instead modify their counts, leading to
coding problems that are structurally different from classical Hamming or
sequence-based insertion--deletion models.

A key motivation, emphasized in~\cite{KovacevicTan2018}, is the connection to
\emph{permutation channels}, where the transmitted sequence may undergo arbitrary
reordering before reception. Similar abstractions arise in molecular communication,
DNA-based storage, and other settings in which the receiver observes an unordered
collection of symbols rather than an indexed sequence. In all of these settings,
deletions manifest as losses in symbol multiplicities, so the natural ambient
space is the multiset simplex $\cS_{n,q}$ rather than the classical Hamming cube.

In our previous work~\cite{CompanionPaper2026}, we focused on the extremal
high-deletion regime, exact results for binary and near-extremal parameters, and
explicit cyclic Sidon-type constructions. The present paper develops two
additional directions that were not the main focus there.

First, we study \emph{polynomial Sidon-type constructions}. These constructions
use factorization in finite polynomial rings to define algebraic syndrome maps,
and they are well suited to the regime of fixed alphabet size and
small deletion radius. In contrast with cyclic constructions, their redundancy is
essentially linear in the number of deletions and independent of the multiset
size $n$.

Second, we study the \emph{geometry of deletion balls}. Unlike the Hamming metric,
balls of a fixed radius in $\cS_{n,q}$ are non-homogeneous: their size
depends strongly on the center. This non-uniformity is the main obstacle in
transferring standard volume arguments to the multiset setting. We therefore
develop a generating-function framework for exact ball enumeration, use it to
prove a global extremal characterization of deletion-ball geometry, and then
derive upper and lower bounds governed by minimal, maximal, and average ball
sizes. In particular, we determine which centers minimize and which centers
maximize the size of a radius-$r$ deletion ball.

A relevant comparison point is the recent work of Goyal, Dao, Kova\v{c}evi\'{c},
and Kiah~\cite{GoyalDaoKovacevicKiah2025}, which studies Gilbert--Varshamov
bounds in several $L_1$ spaces, including the simplex, via multivariate analytic
combinatorics. Their focus is asymptotic and is tailored to scaling regimes in
which the alphabet size grows with the blocklength. In contrast, our goal here
is to obtain exact finite-$n$ formulas in the fixed-alphabet multiset model.

More precisely, we study three related enumerative objects. The first is
$A_{n,q}(m)$, the number of \emph{ordered} pairs of multisets in $\cS_{n,q}$ at
distance exactly $m$, which captures the exact distance distribution of the
space. The second is the average ball size $\overline{B_r}(n,q)$, obtained by
averaging the radius-$r$ deletion-ball size over all centers. The third is the
quantity $|S_{q-1}(r,r)|$, the size of the ideal difference set that
controls the maximal-ball phenomenon away from boundary truncation.

Our main generating-function theorem gives explicit closed formulas for these
quantities in the fixed-alphabet regime. In particular, it yields exact
finite-$n$ formulas for $A_{n,q}(m)$ and $\overline{B_r}(n,q)$, and it also
recovers the known formula for $|S_{q-1}(r,r)|$ within the same one-variable
framework. Thus the exact distance distribution, the exact average-ball size,
and the interior anticode size are placed within one common enumerative
picture. Since the formula for $|S_{q-1}(r,r)|$ already appears in
Kova\v{c}evi\'{c} and Tan~\cite[Lemma~1]{KovacevicTan2018}, the novelty here
lies in the unified derivation of this quantity together with the exact pair and
average-ball counts.

\subsection*{Contributions}
Our main contributions in this paper are as follows.

\begin{itemize}
\item We present projective and affine polynomial Sidon-type constructions for
multiset deletion-correcting codes, prove correctness via uniqueness of
factorization modulo a polynomial, and obtain redundancy $t+O(1)$ in the
regime $t<q$.

\item We derive a generating-function formula for the size of a radius-$r$
deletion ball around an arbitrary center, using difference vectors and a diagonal
reduction of the relevant bivariate series.

\item We obtain an exact rational generating function for the pair enumerator
of the metric space, yielding explicit formulas for the number $A_{n,q}(m)$ of
ordered pairs of multisets at distance $m$ and consequently for the exact
average ball size.

\item We prove a global extremal characterization of deletion-ball sizes:
minimal balls occur at extreme multisets, while maximal balls occur at the most
balanced multisets. We further show that the maximal value is governed by the
ideal difference set $S_{q-1}(r,r)$, with equality in the interior and
boundary truncation near the boundary. For $q=2$ and $q=3$ we obtain fully
explicit specializations.

\item We apply these results to derive volume-based bounds on code size,
including sphere-packing upper bounds, a boundary-aware interpretation of
code--anticode arguments, and Gilbert--Varshamov lower bounds based on average
balls.
\end{itemize}

The rest of the paper is organized as follows. Section~\ref{sec:pre} introduces
the multiset model and the linear/Sidon viewpoint. Section~\ref{sec:polynomial}
presents the polynomial constructions. Section~\ref{sec:balls} develops the
generating-function machinery and the geometry of deletion balls. Section~\ref{sec:vol-bounds}
derives the corresponding volume-based bounds. We conclude in
Section~\ref{sec:conclusion}.

\section{Preliminaries}
\label{sec:pre}

We briefly recall the notation and basic definitions used throughout the paper.
For additional background on linear multiset codes and Sidon-type constructions,
we refer the reader to our previous work~\cite{CompanionPaper2026}.

Let
\[
   \Sigma=\{0,1,\dots,q-1\}, \qquad |\Sigma|=q,
\]
and let $\cS_{n,q}$ denote the set of all multisets of cardinality $n$ over
$\Sigma$. Equivalently, $\cS_{n,q}$ may be identified with the discrete simplex
\[
   \cS_{n,q}
   =
   \Bigl\{
      (x_0,\dots,x_{q-1})\in\mathbb{Z}_{\ge 0}^q
      : \sum_{i=0}^{q-1} x_i=n
   \Bigr\},
\]
so that
\[
   |\cS_{n,q}|=\binom{n+q-1}{q-1}.
\]

For a multiset $S\in\cS_{n,q}$ we write $S(i)$ for the multiplicity of the
symbol $i\in\Sigma$, and we denote its multiplicity vector by
\[
   \mathbf{x}_S=(S(0),S(1),\dots,S(q-1)).
\]

\begin{definition}[Basic multiset operations]
\label{def:multiset-ops}
For multisets $S,T$ over $\Sigma$, define
\[
   (S\cap T)(i)=\min\{S(i),T(i)\},
   \qquad
   (S\setminus T)(i)=\max\{S(i)-T(i),0\},
\]
and
\[
   (S\uplus T)(i)=S(i)+T(i),
   \qquad i\in\Sigma.
\]
\end{definition}

A \emph{code} is a subset $\cC\subseteq\cS_{n,q}$. A \emph{deletion} removes one
occurrence of some symbol, and hence decreases the total cardinality by one.

\begin{definition}[Multiset deletion code]
\label{def:multiset-code}
A code $\cC\subseteq\cS_{n,q}$ of size $M$ that corrects any $t$ deletions is
called an \emph{$S_q[n,M,t]$ multiset $t$-deletion-correcting code}.  For
$q=2$ we write $S[n,M,t]$.  The maximum possible size of such a code is denoted
by $S_q(n,t)$.
\end{definition}

We use the deletion metric introduced in~\cite{KovacevicTan2018}:
\[
   d(S,T)=n-|S\cap T|,
   \qquad S,T\in\cS_{n,q}.
\]
In terms of multiplicity vectors, this is equivalent to the scaled $\ell_1$
metric
\[
   d(S,T)
   =
   \frac12 \sum_{i=0}^{q-1} \bigl|S(i)-T(i)\bigr|
   =
   \frac12 \|\mathbf{x}_S-\mathbf{x}_T\|_1.
\]
Accordingly, the minimum distance of a code $\cC$ is
\[
   d(\cC)=\min_{\substack{S,T\in\cC\\ S\neq T}} d(S,T).
\]

As in the previous paper, a code corrects $t$ deletions if and only if its
minimum distance is at least $t+1$.

\medskip
\noindent
\textbf{Binary specialization.}
When $q=2$, every multiset $S\in\cS_{n,2}$ is uniquely determined by its weight
\[
   w(S)\eqdef S(1),
\]
since then
\[
   \mathbf{x}_S=(n-w(S),\,w(S)).
\]
In this case the metric becomes one-dimensional:
\[
   d(S,T)=|w(S)-w(T)|.
\]
Thus the binary multiset space is isometric to the integer interval
$\{0,1,\dots,n\}$ with the usual absolute-value metric. We will repeatedly use
this simplification later in the paper.

For $r\ge 0$ and $S\in\cS_{n,q}$, the radius-$r$ deletion ball centered at $S$
is
\[
   B_r(S)=\{T\in\cS_{n,q}: d(S,T)\le r\}.
\]
Since the space $\cS_{n,q}$ is not homogeneous, the size of $B_r(S)$ generally
depends on the center $S$. We therefore define
\[
   L_r \eqdef \max_{S\in\cS_{n,q}} |B_r(S)|,
   \qquad
   M_r \eqdef \min_{S\in\cS_{n,q}} |B_r(S)|,
\]
and the average radius-$r$ ball size
\[
   \overline{B_r}(n,q)
   \eqdef
   \frac{1}{|\cS_{n,q}|}\sum_{S\in\cS_{n,q}} |B_r(S)|.
\]
For $m\ge 0$, we also write
\[
   A_{n,q}(m)
   \eqdef
   \bigl|\{(S,T)\in\cS_{n,q}^2:\ d(S,T)=m\}\bigr|
\]
for the number of \emph{ordered} pairs of multisets at distance exactly $m$.

It will also be convenient to use the universal difference-vector set
\[
   S_{q-1}(r^{+},r^{-})
   \eqdef
   \Bigl\{
        \mathbf{z}\in\mathbb{Z}^{q}:
        \sum_{i=0}^{q-1} z_i = 0,\;
        \sum_{i=0}^{q-1} z_i^{+} \le r^{+},\;
        \sum_{i=0}^{q-1} z_i^{-} \le r^{-}
   \Bigr\},
\]
which plays the role of an ideal interior ball in difference-vector
coordinates.

Finally, we recall the notion of linearity used throughout the multiset-code
literature.

Let
\[
   A^{q-1}
   =
   \Bigl\{
      \mathbf{z}\in\mathbb{Z}^q :
      \sum_{i=0}^{q-1} z_i = 0
   \Bigr\}.
\]

\begin{definition}[Linear multiset code]
\label{def:linearity}
A code $\cC\subseteq\cS_{n,q}$ is called \emph{linear} if there exist a lattice
$L\subseteq A^{q-1}$ and a vector $\mathbf{t}\in\mathbb{Z}^q$ with
$\sum_{i=0}^{q-1} t_i=n$ such that
\[
   \{\mathbf{x}_S : S\in\cC\}
   =
   (L+\mathbf{t})\cap \mathbb{Z}_{\ge 0}^q.
\]
\end{definition}

We will also use the standard Sidon-type language.

\begin{definition}[Generalized Sidon set]
\label{def:Bt}
Let $G$ be an abelian group, and let
\[
   B=\{b_0,b_1,\dots,b_{m-1}\}\subseteq G.
\]
For $t\ge 1$, we say that $B$ is a \emph{$B_t$-set} if all multiset sums of at
most $t$ elements of $B$ are distinct.
\end{definition}

The relation between linear multiset codes and Sidon-type sets will be used in
Section~\ref{sec:polynomial}; see also~\cite{KovacevicTan2018,CompanionPaper2026}
for a broader discussion.

\section{Polynomial Sidon-Type Constructions}
\label{sec:polynomial}

In this section we present a polynomial-based Sidon-type construction of
multiset deletion-correcting codes.
Compared with the cyclic Sidon-type approach developed in our previous
work~\cite{CompanionPaper2026}, the main advantage here is a clean algebraic
description in terms of factorization modulo a polynomial together with an
explicit syndrome-decoding procedure. In the fixed-alphabet regime it yields
redundancy linear in~$t$, independent of the blocklength $n$.

For clarity, throughout this section we use the letter $s$ for the size of the
underlying finite field. Thus, in the \emph{projective} version the alphabet has
size $s+1$, whereas in the \emph{affine} version the alphabet has size exactly
$s$.

\subsection{Projective polynomial Sidon-type construction}

Let $s$ be a prime power and let
\[
A \;=\; \mathbb{P}^1(\mathbb{F}_s) \;=\; \mathbb{F}_s \cup \{\infty\},
\qquad |A|=s+1.
\]
Let $f\in\mathbb{F}_s[X]$ be a polynomial satisfying
\[
\deg(f)=t+1
\qquad\text{and}\qquad
f(a)\neq 0 \;\;\text{for all } a\in\mathbb{F}_s.
\]
For example, any irreducible polynomial of degree $t+1$ satisfies this
condition.

Define
\[
R_f \;\eqdef\; \mathbb{F}_s[X]/(f),
\qquad
G \;\eqdef\; R_f^{*}/\mathbb{F}_s^{*},
\]
where $R_f^{*}$ is the group of units of $R_f$.
Let \(\xi\in R_f\) denote the image of \(X\) modulo \(f\).

For \(g\in R_f^{*}\), write \([g]\) for its class in the quotient group \(G\):
\[
[g] \;\eqdef\; g\cdot \mathbb{F}_s^{*}
\;=\;
\{\lambda g:\lambda\in\mathbb{F}_s^{*}\}.
\]
The group operation in \(G\) is induced by multiplication in \(R_f^*\), namely
\[
[g][h]\;\eqdef\;[gh].
\]
This is well defined: if \(g'=\lambda g\) and \(h'=\mu h\) with
\(\lambda,\mu\in\mathbb{F}_s^*\), then
\[
g'h'=(\lambda\mu)gh,
\]
and hence \([g'h']=[gh]\).

Let \(\xi\in R_f\) denote the image of \(X\) modulo \(f\). Define a mapping \(\psi:A\to G\) by
\[
\psi(a)=
\begin{cases}
[\xi-a], & a\in\mathbb{F}_s,\\[2pt]
[1],     & a=\infty.
\end{cases}
\]
Thus \([\xi-a]\) is the class in \(G\) of the residue class of \(X-a\) modulo \(f\).
This induces a homomorphism on multisets
\[
\Psi:\cS_{n,s+1}\longrightarrow G,
\qquad
\Psi(S)=\prod_{a\in A}\psi(a)^{S(a)}.
\]

For a fixed syndrome class \(c\in G\), define
\[
\cC^{\mathrm{proj}}_{n,s+1}(c)
\;=\;
\{\,S\in\cS_{n,s+1} : \Psi(S)=c\,\}.
\]

\begin{theorem}
\label{thm:projective-polynomial}
For every \(c\in G\), the code \(\cC^{\mathrm{proj}}_{n,s+1}(c)\) is
\(t\)-deletion-correcting.
\end{theorem}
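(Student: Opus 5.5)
By the criterion recalled in Section~\ref{sec:pre}, a code corrects $t$ deletions if and only if its minimum distance is at least $t+1$. So it suffices to show the following: if $S\neq T$ both lie in $\cC^{\mathrm{proj}}_{n,s+1}(c)$, then $d(S,T)\ge t+1$. We argue by contradiction and assume $d(S,T)=m\le t$ with $S\ne T$.

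First we pass to difference multisets. Let $U=S\setminus T$ and $V=T\setminus S$. These are disjoint multisets, each of cardinality $m$, since $|S|=|T|=n$ and $d(S,T)=n-|S\cap T|$. Because $\Psi$ is a homomorphism into the abelian group $G$, the equality $\Psi(S)=\Psi(T)$ gives, after cancelling the common part $S\cap T$,
\[
\Psi(U)=\Psi(V).
\]
Write $U=U_0\uplus U_\infty$, where $U_\infty$ consists of the copies of $\infty$, and split $V$ the same way. Since $\psi(\infty)=[1]$, we can rewrite the equality as an identity in $R_f$:
\[
\prod_{a\in U_0}(\xi-a)=\lambda\prod_{b\in V_0}(\xi-b),\qquad \lambda\in\mathbb{F}_s^{*}.
\]
Each factor $\xi-a$ is a unit, because $f(a)\ne0$ means $\gcd(X-a,f)=1$. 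Hence all classes here are well defined in $G$.

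Next we lift to $\mathbb{F}_s[X]$. Put $P(X)=\prod_{a\in U_0}(X-a)$ and $Q(X)=\prod_{b\in V_0}(X-b)$. Both are monic, of degrees $|U_0|\le m\le t$ and $|V_0|\le t$. The identity above says $f\mid P-\lambda Q$. Now $\deg(P-\lambda Q)\le t<t+1=\deg f$, so $P-\lambda Q=0$, that is, $P=\lambda Q$.

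Comparing degrees gives $|U_0|=|V_0|$, and comparing leading coefficients gives $\lambda=1$. Unique factorization in $\mathbb{F}_s[X]$ then gives $U_0=V_0$ as multisets. Since $U$ and $V$ are disjoint, this forces $U_0=V_0=\emptyset$. Because $|U|=|V|=m$, it follows that $|U_\infty|=|V_\infty|=m$. Disjointness again forces $m=0$, so $S=T$, which is a contradiction.

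The main obstacle is the degree step. We need the difference $P-\lambda Q$ to have degree strictly below $\deg f$, and this is exactly where the hypothesis $\deg f=t+1$ enters. We must also handle $\infty$ carefully, since it contributes nothing to the polynomials. It is absorbed by the projective quotient by $\mathbb{F}_s^{*}$ and the degree comparison, and the disjointness of $U$ and $V$ then settles it.
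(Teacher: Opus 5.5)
Your proof is correct and follows essentially the same route as the paper. Both arguments reduce to two multisets of size at most $t$ with equal syndrome, lift them to monic polynomials of degree at most $t<\deg f$, conclude $P=\lambda Q$ with $\lambda=1$, and finish with unique factorization plus a degree count for $\infty$. The only difference is cosmetic: the paper writes $S_i=R\uplus E_i$ for an arbitrary common received multiset $R$ and gets $E_1=E_2$ directly, whereas you use the minimum-distance criterion with $R=S\cap T$ and then exploit the disjointness of $S\setminus T$ and $T\setminus S$.
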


\begin{proof}
Suppose \(S_1,S_2\in\cC^{\mathrm{proj}}_{n,s+1}(c)\) and that, after at most \(t\)
deletions, both yield the same received multiset \(R\).
Since \(S_1\) and \(S_2\) both have cardinality \(n\), the same number
\(r\le t\) of symbols must have been deleted from each.
Hence there exist multisets \(E_1,E_2\) of cardinality \(r\) such that
\[
S_1=R\uplus E_1,
\qquad
S_2=R\uplus E_2.
\]
Using multiplicativity of \(\Psi\) and the fact that \(\Psi(S_1)=\Psi(S_2)=c\),
we obtain
\[
\Psi(E_1)=\Psi(E_2).
\]
Therefore, for some scalar \(C\in\mathbb{F}_s^{*}\),
\[
\prod_{a\in\mathbb{F}_s}(X-a)^{E_1(a)}
\;\equiv\;
C\prod_{a\in\mathbb{F}_s}(X-a)^{E_2(a)}
\pmod f.
\]
Set
\[
P_i(X)\eqdef \prod_{a\in\mathbb{F}_s}(X-a)^{E_i(a)},
\qquad i=1,2.
\]
Since deletions of the symbol \(\infty\) do not contribute to \(P_i\), one has
\[
\deg(P_i)=r-E_i(\infty)\le r\le t,
\qquad i=1,2.
\]
Therefore
\[
\deg(P_1-CP_2)\le \max\{\deg(P_1),\deg(P_2)\}\le t<\deg(f)=t+1.
\]
Since \(P_1\equiv CP_2\pmod f\), the polynomial \(P_1-CP_2\) is divisible by
\(f\). Because its degree is strictly smaller than \(\deg(f)\), it must vanish
identically:
\[
P_1=CP_2.
\]
In particular, \(P_1\) and \(P_2\) have the same degree, so
\(E_1(\infty)=E_2(\infty)\). Both polynomials are monic, so necessarily
\(C=1\), and thus \(P_1=P_2\).
By unique factorization in \(\mathbb{F}_s[X]\), we conclude that \(E_1=E_2\),
and therefore
\[
S_1=R\uplus E_1=R\uplus E_2=S_2.
\]
Thus the code corrects \(t\) deletions.
\end{proof}

\subsection{Affine polynomial variant}

To obtain a construction over an alphabet of size exactly \(s\), we use an
affine variant.
Let
\[
A=\mathbb{F}_s,
\qquad |A|=s,
\]
and let \(f\in\mathbb{F}_s[X]\) satisfy
\[
\deg(f)=t
\qquad\text{and}\qquad
f(a)\neq 0 \;\;\text{for all } a\in\mathbb{F}_s.
\]
Again, any irreducible polynomial of degree \(t\) satisfies this condition when
\(t\ge 2\).

Define
\[
R_f \;\eqdef\; \mathbb{F}_s[X]/(f),
\qquad
G \;\eqdef\; R_f^{*},
\]
and let \(\xi\in R_f\) denote the image of \(X\) modulo \(f\).
Define
\[
\psi(a)=\xi-a \in G,
\qquad a\in\mathbb{F}_s.
\]
This induces
\[
\Psi:\cS_{n,s}\longrightarrow G,
\qquad
\Psi(S)=\prod_{a\in\mathbb{F}_s}\psi(a)^{S(a)}.
\]
For a fixed \(c\in G\), define
\[
\cC^{\mathrm{aff}}_{n,s}(c)
\;=\;
\{\,S\in\cS_{n,s} : \Psi(S)=c\,\}.
\]

\begin{theorem}
\label{thm:affine-polynomial}
For every \(c\in G\), the code \(\cC^{\mathrm{aff}}_{n,s}(c)\) is
\(t\)-deletion-correcting.
\end{theorem}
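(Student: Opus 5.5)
The argument will mirror the proof of Theorem~\ref{thm:projective-polynomial}, with the projective quotient by $\mathbb{F}_s^*$ removed. The degree bound now uses $\deg(f)=t$ in place of $t+1$, and monicity does the work that the scalar $C$ did before. First we check that $\Psi$ is well defined, i.e., that each $\psi(a)=\xi-a$ lies in $R_f^*$. Since $f(a)\neq 0$, the polynomials $X-a$ and $f$ are coprime: $X-a$ is irreducible of degree one and does not divide $f$. Hence $X-a$ is invertible modulo $f$. Consequently $\Psi$ is a monoid homomorphism from multisets (under $\uplus$) into the abelian group $R_f^*$.

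Next, suppose $S_1,S_2\in\cC^{\mathrm{aff}}_{n,s}(c)$ both yield the same received multiset $R$ after at most $t$ deletions. As in the projective proof, both lost the same number $r\le t$ of symbols, so we may write $S_i=R\uplus E_i$ with $|E_i|=r$. Multiplicativity and $\Psi(S_1)=\Psi(S_2)=c$ give $\Psi(R)\Psi(E_1)=\Psi(R)\Psi(E_2)$. Cancelling the unit $\Psi(R)$ yields $\Psi(E_1)=\Psi(E_2)$ in $R_f$, that is,
\[
P_1\equiv P_2 \pmod f,\qquad P_i(X)\eqdef\prod_{a\in\mathbb{F}_s}(X-a)^{E_i(a)}.
\]
There is no scalar ambiguity here, because $G=R_f^*$ itself rather than a quotient.

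The key step, and the only place the argument differs, is the degree comparison. Now $\deg P_1=\deg P_2=r$ exactly, since there is no symbol $\infty$, and both polynomials are monic. Therefore the leading terms cancel and $\deg(P_1-P_2)\le r-1\le t-1<t=\deg(f)$. This cancellation is precisely why $\deg f=t$ suffices here, one less than in the projective case. Since $f\mid P_1-P_2$ and the degree of $P_1-P_2$ is below $\deg f$, we conclude $P_1=P_2$. Unique factorization in $\mathbb{F}_s[X]$ then gives $E_1(a)=E_2(a)$ for all $a$, hence $S_1=S_2$.

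I expect the main subtlety to be the degree step: one must observe that $P_1$ and $P_2$ have equal degree $r$ and are monic, so their difference drops in degree. Without this, the bound $\deg\le t$ would not beat $\deg f=t$. The case $r=0$ is trivial.
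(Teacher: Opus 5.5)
Your proof is correct and follows essentially the same route as the paper: write $S_i=R\uplus E_i$, cancel $\Psi(R)$ to get $P_1\equiv P_2 \pmod f$, use that both are monic of degree $r$ so that $\deg(P_1-P_2)<r\le t=\deg f$, and conclude $E_1=E_2$ by unique factorization. Your explicit check that each $\xi-a$ is a unit in $R_f$ (because $f(a)\neq 0$) is a detail the paper leaves implicit.
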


\begin{proof}
Let \(S_1,S_2\in\cC^{\mathrm{aff}}_{n,s}(c)\) produce the same received multiset
\(R\) after \(r\le t\) deletions. Write
\[
S_1=R\uplus E_1,
\qquad
S_2=R\uplus E_2,
\]
where \(|E_1|=|E_2|=r\).
Then
\[
\Psi(E_1)=\Psi(E_2).
\]
Equivalently,
\[
\prod_{a\in\mathbb{F}_s}(X-a)^{E_1(a)}
\;\equiv\;
\prod_{a\in\mathbb{F}_s}(X-a)^{E_2(a)}
\pmod f.
\]
Define
\[
P_i(X)\eqdef \prod_{a\in\mathbb{F}_s}(X-a)^{E_i(a)},
\qquad i=1,2.
\]
Since \(P_1\) and \(P_2\) are monic of the same degree \(r\le t\), we have
\[
\deg(P_1-P_2)<r\le t=\deg(f).
\]
But \(P_1-P_2\) is divisible by \(f\), hence it must be the zero polynomial.
Therefore \(P_1=P_2\), and unique factorization gives \(E_1=E_2\).
Consequently \(S_1=S_2\), so the code corrects \(t\) deletions.
\end{proof}

The affine variant preserves the same syndrome-decoding philosophy as the
projective one, but works over an alphabet of size exactly \(s\), without the
auxiliary projective symbol \(\infty\).

\subsection{Syndrome decoding}

The algebraic form of the construction leads to a natural decoding rule.

\begin{proposition}[Syndrome decoding]
\label{prop:syndrome-decoding}
Consider either the projective construction
\[
\cC^{\mathrm{proj}}_{n,s+1}(c)
=
\{\,S\in\cS_{n,s+1}:\Psi(S)=c\,\}
\]
or the affine construction
\[
\cC^{\mathrm{aff}}_{n,s}(c)
=
\{\,S\in\cS_{n,s}:\Psi(S)=c\,\}.
\]
Let \(S\) be a transmitted codeword and suppose that the received multiset \(R\)
is obtained from \(S\) by deleting \(r\le t\) symbols. Then there exists a
unique multiset \(E\) of cardinality \(r\) such that
\[
S=R\uplus E
\]
and
\[
\Psi(E)=c\,\Psi(R)^{-1}.
\]
Consequently, decoding reduces to recovering the unique multiset \(E\) of size
exactly \(r=n-|R|\) whose syndrome equals \(c\,\Psi(R)^{-1}\), and then
outputting
\[
\widehat{S}=R\uplus E.
\]
\end{proposition}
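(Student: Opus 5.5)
The plan is to derive existence from the channel model and the homomorphism property of \(\Psi\), and uniqueness from the argument already used in Theorem~\ref{thm:projective-polynomial} and Theorem~\ref{thm:affine-polynomial}.

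\emph{Existence.} Since \(R\) is obtained from \(S\) by deleting \(r\) symbols, \(R\subseteq S\) as multisets. Set \(E\eqdef S\setminus R\), using Definition~\ref{def:multiset-ops}. Then \(S=R\uplus E\) and \(|E|=|S|-|R|=n-|R|=r\). The map \(\Psi(S)=\prod_a\psi(a)^{S(a)}\) converts \(\uplus\) into multiplication in the abelian group \(G\), so \(\Psi(S)=\Psi(R)\Psi(E)\). Because \(S\) is a codeword, \(\Psi(S)=c\), and therefore \(\Psi(E)=c\,\Psi(R)^{-1}\). This also shows that \(r=n-|R|\) is determined by \(R\), which justifies the phrase ``of size exactly \(r=n-|R|\).''

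\emph{Uniqueness.} Let \(E'\) be any multiset with \(|E'|=r\) and \(\Psi(E')=c\,\Psi(R)^{-1}\). Define \(S'\eqdef R\uplus E'\). Then \(|S'|=n\) and \(\Psi(S')=\Psi(R)\Psi(E')=c\), so \(S'\) is a codeword. Moreover, \(R\) is obtained from both \(S\) and \(S'\) by \(r\le t\) deletions. The proofs of Theorems~\ref{thm:projective-polynomial} and~\ref{thm:affine-polynomial} show that this forces \(E=E'\): the condition \(\Psi(E)=\Psi(E')\) gives a congruence \(P_E\equiv C P_{E'}\pmod f\), and the degree bound compared with \(\deg f\) together with unique factorization yields \(E=E'\).

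One could instead cite only the conclusion \(S=S'\) from the theorems and then cancel \(R\) to recover \(E=E'\). Citing the internal step \(E_1=E_2\) directly is cleaner, and I will do that. For the projective case, I will check that equality of the classes \([P_E]=[P_{E'}]\) in \(G=R_f^*/\mathbb{F}_s^*\) is exactly the hypothesis used in that proof. Equality of degrees then also fixes the multiplicity \(E(\infty)=E'(\infty)\).

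\emph{Decoding rule.} The decoder outputs \(\widehat S=R\uplus E\). By existence and uniqueness, this \(E\) is the true error multiset, so \(\widehat S=S\).

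The main point needing care is the uniqueness step. It is tempting to derive it from \(S=S'\) as above, but cancellation for \(\uplus\) is trivial, so either route works; the real work is already contained in the cited theorems. Beyond that, the proof is bookkeeping: both constructions must be treated, with \(\psi(\infty)=[1]\) in the projective case contributing nothing to \(P_E\).
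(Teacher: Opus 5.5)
Your proposal is correct and follows essentially the same route as the paper: existence via $E=S\setminus R$ and the multiplicativity of $\Psi$, and uniqueness by reducing $\Psi(E)=\Psi(E')$ to the degree-versus-$\deg f$ and unique-factorization argument already used in Theorems~\ref{thm:projective-polynomial} and~\ref{thm:affine-polynomial}. The auxiliary codeword $S'=R\uplus E'$ is a valid but unnecessary detour, since the paper applies the internal step $E_1=E_2$ directly.
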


\begin{proof}
Existence is immediate: if \(E\) is the multiset of deleted symbols, then
\[
S=R\uplus E,
\]
and by multiplicativity of \(\Psi\),
\[
\Psi(S)=\Psi(R)\Psi(E).
\]
Since \(S\) belongs to the syndrome class \(c\), we have \(\Psi(S)=c\), and
therefore
\[
\Psi(E)=c\,\Psi(R)^{-1}.
\]

For uniqueness, suppose that \(E_1\) and \(E_2\) are multisets of cardinality
\(r\le t\) such that
\[
\Psi(E_1)=\Psi(E_2).
\]
Applying exactly the same polynomial argument as in the proof of
Theorem~\ref{thm:projective-polynomial} in the projective case, and of
Theorem~\ref{thm:affine-polynomial} in the affine case, we conclude that
\(E_1=E_2\).
\end{proof}

\begin{remark}[Algorithmic implementation]
\label{rem:syndrome-decoding-algorithm}
For fixed \(s\) and \(t\), and for each \(0\le r\le t\), define
\[
\mathcal{E}_r
=
\{\,E : |E|=r\,\},
\]
where \(E\) ranges over multisets on the relevant alphabet
(\(\mathbb{P}^1(\mathbb{F}_s)\) in the projective case, or \(\mathbb{F}_s\) in
the affine case). By Proposition~\ref{prop:syndrome-decoding}, the map
\[
E\longmapsto \Psi(E)
\]
is injective on each \(\mathcal{E}_r\). Hence one may precompute the lookup
table
\[
\mathcal{T}_r
=
\{\,(\Psi(E),E): E\in\mathcal{E}_r\,\}.
\]

Given a received multiset \(R\), the decoder first determines
\[
r=n-|R|,
\]
then computes
\[
\sigma \eqdef c\,\Psi(R)^{-1},
\]
retrieves the unique \(E\in\mathcal{E}_r\) such that
\[
\Psi(E)=\sigma,
\]
and outputs
\[
\widehat{S}=R\uplus E.
\]

In the projective case,
\[
|\mathcal{E}_r|=\binom{r+s}{s},
\]
while in the affine case,
\[
|\mathcal{E}_r|=\binom{r+s-1}{s-1}.
\]
Thus, for fixed \(s\) and \(t\), preprocessing is finite and independent of
\(n\), and online decoding requires \(O(n)\) group operations plus a table
lookup.

Encoding may be viewed as enumerative encoding inside a fixed syndrome class.
Since the ambient multiset space has polynomial size in \(n\) for fixed alphabet
size, a straightforward encoder based on enumerating multiplicity vectors and
retaining only those with the prescribed syndrome runs in polynomial time in
\(n\). A more structured systematic encoder can also be designed by reserving a
short check part and solving a small syndrome-completion problem in the group
\(G\), but we do not pursue this here.
\end{remark}

\subsection{Worked example over \texorpdfstring{$\mathbb{F}_3$}{F3}}

We now illustrate the projective construction on a small explicit example.

\medskip
\noindent
\textbf{Example (projective construction over \(\mathbb{F}_3\)).}
We construct a code in
\[
\cS_{3,4},
\]
that is, a \(4\)-ary multiset code of cardinality \(n=3\), correcting
\(t=1\) deletion.
Let
\[
A=\mathbb{P}^1(\mathbb{F}_3)=\{0,1,2,\infty\}.
\]
Choose
\[
f(X)=X^2+1.
\]
Since
\[
f(0)=1,\qquad f(1)=2,\qquad f(2)=2,
\]
the polynomial \(f\) has no roots in \(\mathbb{F}_3\), and hence it is
irreducible.

Set
\[
R_f=\mathbb{F}_3[X]/(X^2+1),
\qquad
G=R_f^{*}/\mathbb{F}_3^{*},
\]
and let
\[
\alpha \;\eqdef\; X \bmod (X^2+1)\in R_f.
\]
Then
\[
\alpha^2=-1=2.
\]

Since every residue class modulo \(X^2+1\) has a unique representative of
degree at most \(1\), we may write
\[
R_f
=
\{\,a+b\alpha : a,b\in\mathbb{F}_3\,\}
=
\{\,0,1,2,\alpha,2\alpha,1+\alpha,1+2\alpha,2+\alpha,2+2\alpha\,\}.
\]
Because \(f\) is irreducible, \(R_f\) is a field with \(3^2=9\) elements, so
\[
R_f^*=R_f\setminus\{0\}
=
\{\,1,2,\alpha,2\alpha,1+\alpha,1+2\alpha,2+\alpha,2+2\alpha\,\}.
\]

Moreover,
\[
\mathbb{F}_3^*=\{1,2\}.
\]
Hence the quotient group
\[
G=R_f^*/\mathbb{F}_3^*
\]
consists of the equivalence classes under multiplication by \(1\) and \(2\).
Explicitly,
\[
[1]=\{1,2\},
\qquad
[\alpha]=\{\alpha,2\alpha\},
\]
\[
[1+\alpha]=\{1+\alpha,2+2\alpha\},
\qquad
[1+2\alpha]=\{1+2\alpha,2+\alpha\}.
\]
Thus
\[
G=\bigl\{[1],[\alpha],[1+\alpha],[1+2\alpha]\bigr\}.
\]
Note that
\[
\alpha-1=\alpha+2=2+\alpha,
\]
so
\[
[\alpha-1]=[2+\alpha]=[1+2\alpha].
\]

The symbol map is therefore
\[
\psi(0)=[\alpha],\qquad
\psi(1)=[\alpha-1]=[1+2\alpha],\qquad
\psi(2)=[\alpha+1]=[1+\alpha],\qquad
\psi(\infty)=[1].
\]

Consider the syndrome class \([1]\in G\), and define
\[
\cC^{\mathrm{proj}}_{3,4}([1])
=
\{\,S\in\cS_{3,4}:\Psi(S)=[1]\,\}.
\]
A direct calculation shows that this code consists of the five multisets
\[
\{0,0,\infty\},\qquad
\{0,1,1\},\qquad
\{0,2,2\},\qquad
\{1,2,\infty\},\qquad
\{\infty,\infty,\infty\}.
\]

For example,
\[
\Psi(\{0,1,1\})
=
[\alpha]\,[\alpha-1]^2.
\]
Now
\[
(\alpha-1)^2=\alpha^2-2\alpha+1=2+\alpha+1=\alpha,
\]
because \(-2\equiv 1\pmod 3\). Hence
\[
\Psi(\{0,1,1\})
=
[\alpha]\,[\alpha]
=
[\alpha^2]
=
[2]
=
[1],
\]
so indeed
\[
\{0,1,1\}\in\cC^{\mathrm{proj}}_{3,4}([1]).
\]

To illustrate decoding, suppose that
\[
S=\{0,1,1\}
\]
is transmitted and one symbol is deleted, yielding
\[
R=\{0,1\}.
\]
Then \(r=1\), so the deleted multiset \(E\) has cardinality \(1\). Since
\[
\Psi(E)=\Psi(S)\Psi(R)^{-1}=[1]\Psi(R)^{-1},
\]
we compute
\[
\Psi(R)=\psi(0)\psi(1)=[\alpha][\alpha-1]=[\alpha(\alpha-1)].
\]
Therefore
\[
\Psi(E)=[\alpha(\alpha-1)]^{-1}.
\]
But
\[
\alpha(\alpha-1)(\alpha-1)=\alpha(\alpha-1)^2=\alpha\cdot\alpha=\alpha^2=2,
\]
and since \([2]=[1]\), it follows that
\[
[\alpha(\alpha-1)]^{-1}=[\alpha-1].
\]
Hence
\[
\Psi(E)=[\alpha-1]=\psi(1),
\]
so
\[
E=\{1\}.
\]
Thus the original codeword is recovered as
\[
S=R\uplus E=\{0,1\}\uplus\{1\}=\{0,1,1\}.
\]

This example illustrates the general principle: a codeword is simply a multiset
whose multiplicity vector satisfies one algebraic syndrome constraint in a
finite abelian group.

\textbf{Example (affine construction over \(\mathbb{F}_3\)).}\label{ex:tiny-affine}
Take \(s=3\) and \(t=2\), and choose
\[
f(X)=X^2+1,
\]
which has no roots in \(\mathbb{F}_3\). Then
\[
A=\mathbb{F}_3=\{0,1,2\},
\qquad
R_f=\mathbb{F}_3[X]/(X^2+1),
\qquad
G=R_f^*.
\]
If \(\xi\) denotes the image of \(X\) in \(R_f\), then
\[
\psi(0)=\xi,\qquad
\psi(1)=\xi-1,\qquad
\psi(2)=\xi+1.
\]
For any fixed \(c\in G\) and any blocklength \(n\), the code
\[
\cC^{\mathrm{aff}}_{n,3}(c)
=
\{\,S\in\cS_{n,3}:\Psi(S)=c\,\}
\]
consists of all ternary multisets of size \(n\) whose syndrome equals \(c\).
This is the same mechanism as in the projective case, but without the extra
symbol \(\infty\) and without quotienting by \(\mathbb{F}_3^*\).

\subsection{Size and redundancy}

The polynomial constructions partition the ambient multiset space into syndrome
classes indexed by the corresponding group \(G\), so at least one class has size
at least the average.

\medskip
\noindent
\textbf{Projective case.}
Here
\[
|R_f|=s^{t+1},
\]
and hence
\[
|G|=\frac{|R_f^*|}{|\mathbb{F}_s^*|}
\le \frac{|R_f|}{s-1}
=
\frac{s^{t+1}}{s-1}.
\]
Therefore, for some \(c\in G\),
\[
\bigl|\cC^{\mathrm{proj}}_{n,s+1}(c)\bigr|
\;\ge\;
\frac{|\cS_{n,s+1}|}{|G|}
\;\ge\;
\frac{s-1}{s^{t+1}}\binom{n+s}{s}
=
\Omega\!\left(\frac{\binom{n+s}{s}}{s^t}\right).
\]

\medskip
\noindent
\textbf{Affine case.}
Here
\[
|G|=|R_f^*|\le |R_f|=s^t.
\]
Hence, for some \(c\in G\),
\[
\bigl|\cC^{\mathrm{aff}}_{n,s}(c)\bigr|
\;\ge\;
\frac{|\cS_{n,s}|}{|G|}
\;\ge\;
\frac{1}{s^t}\binom{n+s-1}{s-1}.
\]

In either version, the redundancy is linear in \(t\) and independent of \(n\).
More precisely, for fixed \(s\) and \(t\),
\[
R(\cC)=t+O(1).
\]
Indeed, in the affine case one has
\[
R(\cC)\le \log_s |G|\le \log_s(s^t)=t.
\]
In the projective case, since \(q=s+1\),
\[
R(\cC)\le \log_{s+1}|G|
\le \log_{s+1}\!\left(\frac{s^{t+1}}{s-1}\right)
=(t+1)\log_{s+1}s-\log_{s+1}(s-1).
\]
Moreover,
\[
0\le (t+1)\bigl(1-\log_{s+1}s\bigr)
=(t+1)\log_{s+1}\!\left(1+\frac1s\right)
=\frac{(t+1)\ln(1+1/s)}{\ln(s+1)}.
\]
Since \(t<q=s+1\), one has \(t+1\le s+1\), and therefore
\[
\frac{(t+1)\ln(1+1/s)}{\ln(s+1)}
\le
\frac{(s+1)\ln(1+1/s)}{\ln(s+1)}
\le
\frac{s+1}{s\,\ln(s+1)}
=O(1).
\]
Hence also in the projective case one gets \(R(\cC)=t+O(1)\).

\begin{remark}[Sharper group size for irreducible \(f\)]
If \(f\) is irreducible, then \(R_f\) is a field. In the affine case this gives
\[
|G|=|R_f^*|=s^t-1,
\]
while in the projective case
\[
|G|=\frac{|R_f^*|}{|\mathbb{F}_s^*|}
=\frac{s^{t+1}-1}{s-1}.
\]
Hence one may replace the crude bounds above by the slightly sharper estimates
\[
\bigl|\cC^{\mathrm{aff}}_{n,s}(c)\bigr|
\;\ge\;
\frac{|\cS_{n,s}|}{s^t-1},
\qquad
\bigl|\cC^{\mathrm{proj}}_{n,s+1}(c)\bigr|
\;\ge\;
\frac{(s-1)\,|\cS_{n,s+1}|}{s^{t+1}-1}.
\]
\end{remark}

\subsection{Comparison with previous Sidon-type constructions}

It is instructive to compare the present polynomial construction with the cyclic
Sidon-type construction developed in our previous work~\cite{CompanionPaper2026}.
The two approaches are based on different algebraic principles and are strongest
in different parameter regimes.

\begin{itemize}
\item \emph{Advantages of the present polynomial approach:}
one advantage is smaller redundancy. For fixed small deletion radius
\(t\) and moderate or large alphabet size \(q\), the present construction
achieves redundancy of order \(t+O(1)\), whereas the redundancy of the cyclic
family in~\cite{CompanionPaper2026} depends explicitly on \(q\). Thus, although
the cyclic construction is more uniform across the full parameter range, the
present one can yield smaller redundancy when \(t\) is small and the alphabet
size is moderate or large.

A second advantage is structural: the present method yields a distinct
algebraic family of multiset deletion-correcting codes, rooted in finite-field,
affine, and projective ideas. In this sense, it is not merely a reformulation
of the cyclic approach of~\cite{CompanionPaper2026}, but an alternative
construction principle. This algebraic viewpoint may also be of independent
combinatorial interest.

\item \emph{Advantages of the cyclic approach of~\cite{CompanionPaper2026}:}
the cyclic construction applies uniformly to arbitrary alphabet sizes and
arbitrary deletion radii, and it is also algorithmically simpler. In addition,
the earlier paper establishes several exact results in extremal regimes that are
not the focus of the present polynomial method.

\item \emph{Limitations of the present polynomial approach:}
the alphabet size is constrained by finite-field arithmetic. In the affine
version one has \(q=s\), while in the projective version one has \(q=s+1\),
where \(s\) is a prime power. Moreover, the present construction requires
\(t<q\). Thus, while it is especially attractive for small deletion radius over
moderate or large alphabets, it is less flexible than the cyclic approach in the
full \((q,t)\)-range.
\end{itemize}

Accordingly, the two constructions should be viewed as complementary rather than
as one uniformly dominating the other: the present polynomial method is
well suited to the small-\(t\), moderate/large-\(q\) regime, while
the cyclic method of~\cite{CompanionPaper2026} remains more general and simpler
algorithmically.


\section{Balls in the Multiset Space}
\label{sec:balls}

In this section we study the structure and cardinalities of metric balls in the
multiset space $\cS_{n,q}$ under the deletion distance
\[
   d(S,T) = n - |S \cap T|.
\]
Unlike the Hamming metric, where all balls of a given radius have the same
size, balls in $\cS_{n,q}$ are \emph{non-uniform}: their size depends
strongly on the choice of the center.
This lack of regularity is a defining feature of the multiset model and has
direct consequences for both upper and lower bounds on the size of multiset
deletion codes.

Our primary focus is on the \emph{minimal} radius-$r$ ball size, as it governs
sphere-packing upper bounds.
Understanding which centers yield the smallest balls is therefore essential
for deriving meaningful limitations on multiset deletion codes.
In addition, we analyze maximal and average ball sizes in specific regimes.
While maximal balls are relevant for anticode-based arguments, average ball
sizes play a central role in Gilbert--Varshamov type lower bounds.

We begin by developing a general counting framework for radius-$r$ balls around
an arbitrary center.
This framework is based on a stars-and-bars interpretation of multisets,
difference vectors, and generating functions.
Although it applies to individual balls, it will be particularly useful later
for computing \emph{average} ball sizes, where summation over all centers
naturally interacts with the generating-function formulation.
The framework will then be specialized to identify extremal centers and to
derive explicit formulas in low-alphabet cases.

\subsection{A generating-function formula via difference vectors}
\label{subsec:ball-general-formula}

We adopt the classical \emph{stars-and-bars} interpretation of multisets.
A multiset $S\in\cS_{n,q}$ is viewed as a distribution of $n$ identical balls
(stars) among $q$ bins (symbols), where the multiplicity $x_i$ records the number
of balls in bin~$i$.
Under this interpretation, moving within a deletion ball corresponds to
redistributing balls among bins while preserving the total number of balls.

Fix a multiset $S\in\cS_{n,q}$ with multiplicity vector
\[
   \mathbf{x}_S = (x_0,\dots,x_{q-1}) \in \mathbb{Z}_{\ge 0}^q,
   \qquad \sum_{i=0}^{q-1} x_i = n .
\]
Here and throughout, bold symbols denote vectors, while non-bold symbols denote
their individual coordinates.
For any $T\in\cS_{n,q}$ with multiplicity vector
$\mathbf{x}_T=(y_0,\dots,y_{q-1})$, the deletion distance satisfies
\[
   d(S,T)
   = \frac12 \sum_{i=0}^{q-1} |x_i-y_i|.
\]

It is convenient to describe the radius-$r$ ball around $S$ in terms of
\emph{difference vectors}.
For each $T\in\cS_{n,q}$ define
\[
   \mathbf{z} = \mathbf{x}_T - \mathbf{x}_S \in \mathbb{Z}^q .
\]
Since both $S$ and $T$ contain exactly $n$ balls, any such vector satisfies
\[
   \sum_{i=0}^{q-1} z_i = 0 .
\]
Intuitively, a difference vector $\mathbf{z}$ records how balls are transferred
between bins: positive coordinates correspond to adding balls to certain bins,
while negative coordinates correspond to removing balls from others.
Preservation of the total number of balls forces these changes to balance.

For example, if $S=(2,1,1)$ and $T=(1,2,1)$, then
$\mathbf{z}=\mathbf{x}_T-\mathbf{x}_S=(-1,1,0)$, whose coordinates indeed sum
to zero.

Write $\mathbf{z}=\mathbf{z}^{+}-\mathbf{z}^{-}$, where
$\mathbf{z}^{+},\mathbf{z}^{-}\in\mathbb{Z}_{\ge 0}^q$ have disjoint supports and
\[
   z_i^{+}=\max\{z_i,0\}, \qquad
   z_i^{-}=\max\{-z_i,0\}.
\]
Here $\mathbf{z}^{+}$ represents the number of balls added to each bin, while
$\mathbf{z}^{-}$ represents the number of balls removed.
Balls may be added without restriction, whereas removals are limited by the
available multiplicities in $S$.
With this notation,
\[
   d(S,T)
   = \sum_{i=0}^{q-1} z_i^{+}
   = \sum_{i=0}^{q-1} z_i^{-}.
\]

Moreover, the nonnegativity constraint $\mathbf{x}_T\in\mathbb{Z}_{\ge 0}^q$
is equivalent to the componentwise inequality
\[
   \mathbf{z}^{-} \le \mathbf{x}_S .
\]

Consequently, the radius-$r$ ball around $S$ may be written as
\[
   B_r(S)
   =
   \Bigl\{
      \mathbf{z}\in\mathbb{Z}^q :
      \sum_{i=0}^{q-1} z_i = 0,\;
      \sum_{i=0}^{q-1} z_i^{+} \le r,\;
      \mathbf{z}^{-} \le \mathbf{x}_S
   \Bigr\}.
\]

\medskip
We encode these constraints using a bivariate generating function.
Introduce formal variables $s$ and $\tau$, where the exponent of $s$ tracks the
total number of added balls (i.e., $\sum_i z_i^{+}$), and the exponent of $\tau$
tracks the total number of removed balls (i.e., $\sum_i z_i^{-}$).

For a fixed coordinate $i$, the possible values of $z_i$ are:
\begin{itemize}
  \item $z_i = 0$,
  \item $z_i = k \ge 1$, contributing $s^k$,
  \item $z_i = -k$ for $1 \le k \le x_i$, contributing $\tau^k$.
\end{itemize}
Thus the single-coordinate generating function is
\[
   F_i(s,\tau)
   = 1 + \sum_{k\ge 1} s^k + \sum_{k=1}^{x_i} \tau^k .
\]

The coordinates contribute independently to the difference vector.
Accordingly, multiplying the single-coordinate generating functions over all
$i$ enumerates all possible ways of redistributing balls among the $q$ bins,
with the exponents of $s$ and $\tau$ recording the total numbers of additions
and removals, respectively.
This yields the global generating function
\[
   F_S(s,\tau)
   = \prod_{i=0}^{q-1} F_i(s,\tau).
\]

For integers $u,v\ge 0$, the coefficient
\[
   [s^u \tau^v]\,F_S(s,\tau)
\]
counts the number of difference vectors $\mathbf{z}$ satisfying
$\sum_i z_i^{+}=u$ and $\sum_i z_i^{-}=v$.
Moreover,
\[
   \sum_{i=0}^{q-1} z_i
   =
   \sum_{i=0}^{q-1} z_i^{+}-\sum_{i=0}^{q-1} z_i^{-}
   =
   u-v,
\]
so extracting the diagonal $u=v$ enforces the conservation constraint
$\sum_i z_i=0$ automatically.
Since feasible difference vectors must satisfy $u=v$, we obtain the following
exact formula.

\begin{lemma}[Generating-function formula for ball sizes]
\label{lem:ball-size-general}
For every $S\in\cS_{n,q}$ and every $r\ge 0$,
\[
   |B_r(S)|
   =
   \sum_{u=0}^{r}
      [s^u \tau^u]\,
      \prod_{i=0}^{q-1}
      \left(
         1 + \sum_{k\ge 1} s^k + \sum_{k=1}^{x_i} \tau^k
      \right).
\]
\end{lemma}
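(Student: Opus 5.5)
The plan is to build an explicit bijection between $B_r(S)$ and a set of difference vectors, and then read off the count coefficient by coefficient. First I would check that the map $T\mapsto \mathbf{z}=\mathbf{x}_T-\mathbf{x}_S$ is a bijection from $B_r(S)$ onto
\[
Z_r(S)=\Bigl\{\mathbf{z}\in\mathbb{Z}^q:\ \sum_i z_i=0,\ \sum_i z_i^{+}\le r,\ \mathbf{z}^{-}\le\mathbf{x}_S\Bigr\}.
\]
Injectivity is clear. For surjectivity, given $\mathbf{z}\in Z_r(S)$, set $\mathbf{y}=\mathbf{x}_S+\mathbf{z}$. Its coordinates sum to $n$, and $y_i\ge x_i-z_i^{-}\ge 0$. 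Moreover $d(S,T)=\tfrac12\|\mathbf{z}\|_1=\tfrac12\bigl(\sum z_i^{+}+\sum z_i^{-}\bigr)$, and since $\sum z_i=0$ forces $\sum z_i^{+}=\sum z_i^{-}$, this equals $\sum z_i^{+}\le r$. Conversely, every $T\in B_r(S)$ yields such a $\mathbf{z}$.

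Next I would identify monomials of the product with vectors. For each coordinate $i$, the terms of $F_i(s,\tau)$ correspond bijectively to the allowed values $z_i\in\{-x_i,\dots,-1,0,1,2,\dots\}$, via the monomial $s^{z_i^{+}}\tau^{z_i^{-}}$. Each of these terms is a distinct monomial with coefficient $1$; in particular, $0$ gives $1$, and $s^k$ and $\tau^k$ are never confused because $z_i^{+}$ and $z_i^{-}$ have disjoint supports. Expanding $\prod_i F_i$, which is legitimate as a formal power series in $s,\tau$ because only finitely many choices give a fixed monomial, each product term corresponds to exactly one vector $\mathbf{z}$ with $\mathbf{z}^{-}\le\mathbf{x}_S$, and contributes $s^{\sum z_i^{+}}\tau^{\sum z_i^{-}}$. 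Hence $[s^u\tau^v]F_S$ equals the number of integer vectors with $\mathbf{z}^{-}\le\mathbf{x}_S$, $\sum z_i^{+}=u$ and $\sum z_i^{-}=v$. This number is finite, since each $|z_i|\le \max(u,v)$.

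To finish, I would note that $\sum z_i=u-v$, so the condition $\sum z_i=0$ is exactly $u=v$. Partitioning $Z_r(S)$ by the value $u=\sum z_i^{+}\in\{0,\dots,r\}$ gives $|B_r(S)|=|Z_r(S)|=\sum_{u=0}^r[s^u\tau^u]F_S$, which is the claimed formula. The only delicate points are the formal-series well-definedness and the injectivity of the term-to-vector correspondence. Both are handled by the disjoint-support observation and the coefficient finiteness noted above, so I do not expect a real obstacle.
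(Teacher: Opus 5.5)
Your proposal is correct and follows essentially the same route as the paper. Both identify $B_r(S)$ with the feasible difference vectors, encode each coordinate's allowed values of $z_i$ by $F_i(s,\tau)$, and use $\sum_i z_i=u-v$ so that diagonal extraction enforces conservation; your explicit surjectivity check and your remarks on formal-series well-definedness and coefficient finiteness make rigorous steps the paper states informally.
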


\medskip
\noindent\textbf{Illustration (binary case).}
Consider the binary case $q=2$.
Every multiset $S\in\cS_{n,2}$ is uniquely represented by its multiplicity
vector $\mathbf{x}_S=(x_0,x_1)$, where $x_0+x_1=n$.
The generating function of Lemma~\ref{lem:ball-size-general} then becomes
\[
   F_S(s,\tau)
   =
   \Bigl(1+\sum_{k\ge1}s^k+\sum_{k=1}^{x_0}\tau^k\Bigr)
   \Bigl(1+\sum_{k\ge1}s^k+\sum_{k=1}^{x_1}\tau^k\Bigr).
\]

A diagonal term $s^u \tau^u$ arises precisely by choosing a positive shift of
size $u$ in one coordinate of the difference vector and a negative shift of size
$u$ in the other coordinate.
Such a choice is feasible if and only if the negative shift does not exceed the
corresponding multiplicity, that is, if $u\le x_0$ or $u\le x_1$, respectively.
Consequently,
\[
   [s^u \tau^u]\,F_S(s,\tau)
   =
   \mathbf{1}\{u\le x_0\}
   +
   \mathbf{1}\{u\le x_1\},
   \qquad u\ge1,
\]
while $[s^0 \tau^0]\,F_S(s,\tau)=1$.

Summing over $u=0,\dots,r$ yields
\[
   |B_r(S)|
   =
   1+\sum_{u=1}^{r}
   \bigl(\mathbf{1}\{u\le x_0\}+\mathbf{1}\{u\le x_1\}\bigr).
\]
For an extreme multiset such as $S=\{0,0,\dots,0\}$ (whose multiplicity vector is
$(x_0,x_1)=(n,0)$), this expression reduces to
$|B_r(S)|=1+\min\{r,n\}$.
For more balanced multisets, both indicators may contribute over a larger range
of $u$, leading to larger ball sizes.

This computation illustrates how the generating-function formula captures the
dependence of deletion-ball sizes on the choice of the center.
In subsequent sections we formalize this intuition and prove that, in the binary
case, extreme multisets attain the minimal radius-$r$ ball size, while balanced
multisets attain the maximal one.

\subsection{Minimal balls occur at extreme centers}
\label{subsec:min-balls}

We now identify the centers that minimize the size of a radius-$r$ deletion ball
in $\cS_{n,q}$.

Throughout this subsection we assume $0 \le r < n$,
since for $r\ge n$ every radius-$r$ ball coincides with the entire space
$\cS_{n,q}$ and all centers yield the same ball size.

These two extremal statements are not merely heuristic consequences of the
geometry. Rather, they give a formal global characterization of the centers
that minimize and maximize deletion-ball size in $\cS_{n,q}$.

\medskip
\noindent
A shorter proof of the next lemma is possible. We nevertheless present it in the
completion-counting form below, because the same mechanism is needed again in
Lemma~\ref{lem:max-balanced}, where the argument is more delicate. In this way,
the proof of the minimal-ball statement also prepares the proof of the
maximal-ball statement.

\medskip
\noindent
\textbf{Shared local counting setup.}
Let $S\in\cS_{n,q}$ have multiplicity vector
\[
   \mathbf{x}=(x_0,\dots,x_{q-1}),
\]
and fix two coordinates $i,j$. Suppose that on this pair the center has values
$(\alpha,\beta)$, while all other coordinates are kept fixed. To count
multisets in $B_r(S)$, we use the canonical difference-vector decomposition
\begin{equation}
\label{eq:local-decomposition}
   \mathbf{z}=\mathbf{z}^+-\mathbf{z}^-\in\Delta_r(S),
\end{equation}
where
\[
   \mathbf{z}^+,\mathbf{z}^-\in\mathbb{Z}_{\ge0}^q,
   \qquad
   \supp(\mathbf{z}^+)\cap\supp(\mathbf{z}^-)=\varnothing,
   \qquad
   |\mathbf{z}^+|=|\mathbf{z}^-|\le r.
\]
For a fixed center $S$, each multiset $T\in B_r(S)$ corresponds to a unique
such vector $\mathbf{z}$.

We first fix the negative part outside the coordinates $i,j$. Thus fix
\begin{equation}
\label{eq:local-outside-pattern}
   \bar{\mathbf{u}}=(u_k)_{k\notin\{i,j\}}\in\mathbb{Z}_{\ge0}^{q-2},
\end{equation}
representing the coordinates of $\mathbf{z}^-$ outside the pair $(i,j)$, and
set
\begin{equation}
\label{eq:local-tlc}
   t\eqdef |\bar{\mathbf{u}}|\le r,
   \qquad
   \ell\eqdef |\supp(\bar{\mathbf{u}})|,
   \qquad
   c\eqdef r-t.
\end{equation}
Here $c$ is the remaining deletion budget on the pair $(i,j)$.

For $h\ge 1$, let $A_{\alpha,\beta}(h)$ be the number of ways to delete a total
of $h$ symbols from the pair $(\alpha,\beta)$ using exactly one of the two
coordinates, and let $B_{\alpha,\beta}(h)$ be the number of ways to delete a
total of $h$ symbols using both coordinates. Explicitly,
\begin{equation}
\label{eq:local-Aab}
   A_{\alpha,\beta}(h)=\mathbf{1}_{\{h\le \alpha\}}+\mathbf{1}_{\{h\le \beta\}},
\end{equation}
and
\begin{equation}
\label{eq:local-Bab}
   B_{\alpha,\beta}(h)
   =
   \bigl|\{u\in\{1,\dots,h-1\}: u\le \alpha,\ h-u\le \beta\}\bigr|.
\end{equation}

If the pair contributes total deletion weight $h$, then the total deletion
weight is $t+h$. When exactly one of the coordinates $i,j$ is used for
deletion, the positive part may be distributed over the remaining
$q-\ell-1$ available coordinates, giving
\begin{equation}
\label{eq:local-W1}
   W_1(h)
   \eqdef
   \binom{t+h+q-\ell-2}{q-\ell-2}
\end{equation}
possible completions. When both $i$ and $j$ are used for deletion, the positive
part may be distributed over the remaining $q-\ell-2$ coordinates, giving
\begin{equation}
\label{eq:local-W2}
   W_2(h)
   \eqdef
   \binom{t+h+q-\ell-3}{q-\ell-3},
\end{equation}
with the usual convention that
\[
   \binom{n}{k}=0 \qquad \text{whenever } k<0 \text{ or } k>n.
\]

For the fixed outside deletion pattern \eqref{eq:local-outside-pattern}, define
\begin{equation}
\label{eq:local-Cab}
   C_{\alpha,\beta}(t,\ell)
   \eqdef
   \binom{t+q-\ell-1}{q-\ell-1}
   +
   \sum_{h=1}^{c}\bigl(A_{\alpha,\beta}(h)W_1(h)+B_{\alpha,\beta}(h)W_2(h)\bigr).
\end{equation}
This is the number of admissible completions of the fixed outside negative part
$\bar{\mathbf{u}}$ to a full canonical difference vector in $\Delta_r(S)$. As
$\bar{\mathbf{u}}$ ranges over all feasible outside deletion patterns, these
completions exhaust $B_r(S)$ exactly once. Therefore,
\begin{equation}
\label{eq:local-ball-sum}
   |B_r(S)|
   =
   \sum_{\bar{\mathbf{u}}}
   C_{\alpha,\beta}\bigl(|\bar{\mathbf{u}}|,|\supp(\bar{\mathbf{u}})|\bigr),
\end{equation}
where the sum is over all feasible outside deletion patterns.

\begin{lemma}[Minimal balls at extreme centers]
\label{lem:extreme-min-ball}
Fix $n,q$ and an integer radius $0\le r<n$.
Among all centers $S\in\cS_{n,q}$, the size of the radius-$r$ ball $B_r(S)$
is minimized when $S$ is an extreme multiset,
\[
   S=\{a,a,\dots,a\}
   \qquad\text{for some } a\in\Sigma,
\]
that is, when the multiplicity vector is $(n,0,\dots,0)$ up to permutation.
For such a center,
\[
   |B_r(S)| = \binom{r+q-1}{q-1}.
\]
\end{lemma}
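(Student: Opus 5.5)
The plan has two parts: an exact count at the extreme center, and a lower bound $|B_r(S)|\ge\binom{r+q-1}{q-1}$ for every center. I would obtain the lower bound by a merging argument built on the shared local counting setup.

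\textbf{Exact value at the extreme center.} Let $S$ have multiplicity vector $(n,0,\dots,0)$ with respect to the symbol $a$. In the description of $B_r(S)$ by difference vectors, the constraint $\mathbf{z}^-\le\mathbf{x}_S$ forces $\mathbf{z}^-$ to be supported on $a$ alone. Hence $\mathbf{z}$ is determined by its positive part $\mathbf{z}^+$, which is an arbitrary vector in $\mathbb{Z}_{\ge0}^{q-1}$ on the other coordinates with $|\mathbf{z}^+|=u\le r$. The matching deletion of $u$ copies of $a$ is always feasible because $u\le r<n$. Summing over $u$ gives
\[
   |B_r(S)|=\sum_{u=0}^{r}\binom{u+q-2}{q-2}=\binom{r+q-1}{q-1}.
\]
Equivalently, one can read this off Lemma~\ref{lem:ball-size-general}: the diagonal coefficient $[s^u\tau^u]$ equals $\binom{u+q-2}{q-2}$ for every $u\le r$.

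\textbf{Lower bound by merging.} I would show that for any center $S$ and any two coordinates $i,j$ with values $(\alpha,\beta)$, replacing them by $(\alpha+\beta,0)$ does not increase $|B_r(S)|$. Iterating this at most $q-1$ times reaches an extreme multiset, which proves the minimality claim. The replacement leaves all other coordinates fixed, so the set of feasible outside deletion patterns $\bar{\mathbf{u}}$ in \eqref{eq:local-ball-sum} is the same before and after. It therefore suffices to prove, for each fixed $(t,\ell)$,
\[
   C_{\alpha,\beta}(t,\ell)\ \ge\ C_{\alpha+\beta,0}(t,\ell).
\]
Using \eqref{eq:local-Aab}, \eqref{eq:local-Bab} and \eqref{eq:local-Cab}, the difference $C_{\alpha,\beta}-C_{\alpha+\beta,0}$ equals
\[
   \sum_{h=1}^{c}\Bigl[\bigl(\mathbf{1}_{\{h\le\alpha\}}+\mathbf{1}_{\{h\le\beta\}}-\mathbf{1}_{\{h\le\alpha+\beta\}}\bigr)W_1(h)+B_{\alpha,\beta}(h)W_2(h)\Bigr].
\]
The bracketed coefficient of $W_1(h)$ is $+1$ for $h\le\min(\alpha,\beta)$, is $0$ for $\min(\alpha,\beta)<h\le\max(\alpha,\beta)$, and is $-1$ only for $\max(\alpha,\beta)<h\le\alpha+\beta$.

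\textbf{Main obstacle.} The difficulty is that this comparison does not hold term by term in $h$. Take $W_2\equiv0$, which happens when $\ell=q-2$, and note that $W_1(h)$ is nondecreasing in $h$. Then the negative terms sit at large $h$ with larger weights than the positive terms at small $h$. My first attempt would be a pairing $h\leftrightarrow h+\max(\alpha,\beta)$: charge each negative term at $h'=\max(\alpha,\beta)+h$ to the positive term at $h$ together with the $B_{\alpha,\beta}(h')W_2(h')$ contribution, using that $B_{\alpha,\beta}(h')\ge1$ whenever $\max(\alpha,\beta)<h'\le\alpha+\beta$.

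If that bookkeeping fails in the degenerate case $W_2\equiv0$, the fallback is to stop comparing the local counts $C$. Instead I would prove the global inequality directly by an injection from $B_r(\{a^n\})$ into $B_r(S)$. The extreme ball is indexed by positive parts $\mathbf{p}$ on $q-1$ coordinates with $|\mathbf{p}|\le r$. The map sends $\mathbf{p}$ to the multiset obtained from $S$ by deleting $|\mathbf{p}|$ symbols in a fixed canonical order, chosen so that the deleted symbols avoid the support of $\mathbf{p}$ as far as possible, and then adding $\mathbf{p}$. The step needing the most care in this fallback is proving that the map is injective.
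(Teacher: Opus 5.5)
\emph{Verdict: there is a genuine gap in the lower bound.} Your exact count at the extreme center is correct and is essentially the paper's weak-composition argument.

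The merging inequality $C_{\alpha,\beta}(t,\ell)\ge C_{\alpha+\beta,0}(t,\ell)$ is in fact true, but you do not prove it, and the pairing you propose fails. Take $\alpha=2$, $\beta=1$, $c\ge 3$. The coefficient of $W_1(2)$ vanishes and $B_{2,1}(2)=B_{2,1}(3)=1$, so the difference is $W_1(1)-W_1(3)+W_2(2)+W_2(3)$. Pairing $h=1$ with $h'=3$ and spending only $B_{2,1}(3)W_2(3)$ leaves $W_1(1)+W_2(3)-W_1(3)=-W_2(2)$, which is negative whenever $\ell\le q-3$. The term $B_{2,1}(2)W_2(2)$ from the middle range $\min(\alpha,\beta)<h\le\max(\alpha,\beta)$, which your scheme never uses, is indispensable.

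More fundamentally, the difference in this example is exactly $0$. So no argument that uses only the monotonicity of $W_1$ can work. You need the exact relation $W_1(h)-W_1(h-1)=W_2(h)$ (Pascal's identity), which is the key step of the paper's proof. This also shows your diagnosis of the obstacle is backwards. When $W_2\equiv 0$, i.e.\ $\ell=q-2$, Pascal forces $W_1\equiv 1$, and the inequality reduces to $\min(c,\alpha)+\min(c,\beta)\ge\min(c,\alpha+\beta)$. All the difficulty lies in the nondegenerate case.

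The fallback injection does not close the gap either. Its deletion rule depends on $\mathbf{p}$, it is only loosely specified, and its injectivity---the entire content---is left open.

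Any of the following repairs works.
\begin{itemize}
\item[(i)] Keep your merging step but substitute $W_1(h)=W_1(0)+\sum_{g=1}^{h}W_2(g)$, and assume $\alpha\ge\beta$. The coefficient of $W_1(0)$ is $\min(c,\alpha)+\min(c,\beta)-\min(c,\alpha+\beta)$. For $1\le g\le c$, the coefficient of $W_2(g)$ is:
\begin{itemize}
\item[--] that same quantity if $g\le\beta$;
\item[--] $\beta+\min(c,\alpha)-\min(c,\alpha+\beta)$ if $\beta<g\le\alpha$;
\item[--] $\alpha+\beta-\min(c,\alpha+\beta)$ if $\alpha<g\le\alpha+\beta$;
\item[--] $0$ otherwise.
\end{itemize}
All of these are nonnegative.
\item[(ii)] Follow the paper and use unit moves $(a,b)\mapsto(a+1,b-1)$ with $a\ge b\ge 1$. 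Then $A$ changes only at $h\in\{b,a+1\}$ and $B$ only on $b+1\le h\le a+1$. Pascal's identity makes the difference equal to $0$ or to $-W_1(b)-\sum_{h=b+1}^{c}W_2(h)\le 0$.
\item[(iii)] Replace your injection by one that does not depend on $\mathbf{p}$. Since $r<n$, fix any $R\subseteq S$ with $|R|=n-r$. Then $E\mapsto R\uplus E$, over multisets $E$ of size $r$, is trivially injective. It lands in $B_r(S)$ because $|S\cap(R\uplus E)|\ge |R|=n-r$. This gives $|B_r(S)|\ge\binom{r+q-1}{q-1}$ immediately.
\end{itemize}
Option (iii) is presumably the shorter proof the paper alludes to. The paper uses the local-counting route only because the same mechanism is reused for the maximal-ball lemma.
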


\begin{proof}
By symmetry of the alphabet, it suffices to consider the extreme center
\[
   S^{\star}=\{0^n\},
\]
whose multiplicity vector is $(n,0,\dots,0)$.

\smallskip
\noindent
We first compute the size of $B_r(S^{\star})$.
Any multiset $T\in B_r(S^{\star})$ is obtained by removing $u\le r$ balls from
the unique nonzero coordinate and then reinserting $u$ symbols arbitrarily from
the $q$-ary alphabet, allowing repetitions and in particular allowing symbols to
be placed back into the original coordinate.
Equivalently, starting from $(n,0,\dots,0)$, each such $T$ corresponds uniquely
to a multiset of size $u\le r$ over an alphabet of size $q$.
The number of such multisets equals the number of weak compositions of an
integer at most $r$ into $q$ parts, and therefore
\[
   |B_r(S^{\star})| = \binom{r+q-1}{q-1}.
\]

\smallskip
\noindent
We now show that an \emph{unbalancing move} cannot increase the ball size.
Let $S\in\cS_{n,q}$ have multiplicity vector $\mathbf{x}=(x_0,\dots,x_{q-1})$.
If $S$ is not extreme, then at least two coordinates are positive. Choose
indices $i,j$ such that
\[
   a\eqdef x_i \ge x_j \eqdef b \ge 1,
\]
and let $S'$ be obtained by replacing the pair $(a,b)$ with $(a+1,b-1)$ while
leaving all other coordinates unchanged. It suffices to prove that
\[
   |B_r(S')|\le |B_r(S)|.
\]

By the shared setup above, and in particular by \eqref{eq:local-ball-sum},
for each fixed outside deletion pattern it is enough to compare
$C_{a+1,b-1}(t,\ell)$ and $C_{a,b}(t,\ell)$. Since the first term in
\eqref{eq:local-Cab} corresponds to $h=0$, it is independent of the pair.
Moreover,
\[
   A_{a+1,b-1}(h)-A_{a,b}(h)
   =
   \mathbf{1}_{\{h=a+1\}}-\mathbf{1}_{\{h=b\}},
\]
and
\[
   B_{a+1,b-1}(h)-B_{a,b}(h)
   =
   \begin{cases}
      -1, & b+1\le h\le a+1,\\
      0, & \text{otherwise}.
   \end{cases}
\]
Hence
\[
\begin{aligned}
   C_{a+1,b-1}(t,\ell)-C_{a,b}(t,\ell)
   &=
   \mathbf{1}_{\{c\ge a+1\}}W_1(a+1)
   -
   \mathbf{1}_{\{c\ge b\}}W_1(b) \\
   &\qquad -
   \sum_{h=b+1}^{\min(c,a+1)} W_2(h).
\end{aligned}
\]
By Pascal's identity,
\[
   W_1(h)-W_1(h-1)=W_2(h) \qquad (h\ge1),
\]
so
\[
   W_1(a+1)-W_1(b)=\sum_{h=b+1}^{a+1} W_2(h).
\]
If $c<b$, then the difference is $0$. If $b\le c<a+1$, then
\[
\begin{aligned}
   C_{a+1,b-1}(t,\ell)-C_{a,b}(t,\ell)
   &= -W_1(b)-\sum_{h=b+1}^{c}W_2(h) \\
   &\le 0.
\end{aligned}
\]
If $c\ge a+1$, then
\[
\begin{aligned}
   C_{a+1,b-1}(t,\ell)-C_{a,b}(t,\ell)
   &= W_1(a+1)-W_1(b)-\sum_{h=b+1}^{a+1}W_2(h) \\
   &=0.
\end{aligned}
\]
Thus in all cases
\[
   C_{a+1,b-1}(t,\ell)\le C_{a,b}(t,\ell).
\]
Summing over all outside deletion patterns in \eqref{eq:local-ball-sum} gives
\[
   |B_r(S')|\le |B_r(S)|.
\]

Starting from any non-extreme multiplicity vector, repeated unbalancing moves
eventually produce an extreme vector, and at each step the radius-$r$ ball
size does not increase. Therefore every extreme multiset minimizes
$|B_r(S)|$, and its value is $\binom{r+q-1}{q-1}$, as claimed.
\end{proof}

\subsection{Maximal balls and the set $S_{q-1}(r,r)$}
\label{subsec:max-balls}

We now relate metric balls in $\cS_{n,q}$ to the combinatorial set
$S_{q-1}(r,r)$ introduced by Kova\v{c}evi\'c and Tan~\cite{KovacevicTan2018},
and clarify when this set exactly describes the collection of feasible
difference vectors of a radius-$r$ ball.

Let $S\in\cS_{n,q}$ be a multiset with multiplicity vector
$\mathbf{x}_S=(x_0,\dots,x_{q-1})\in\mathbb{Z}_{\ge 0}^q$.
Recall from Section~\ref{subsec:ball-general-formula} that the difference
vectors corresponding to multisets at deletion distance at most $r$ from $S$
form the set
\[
   \Delta_r(S)
   =
   \{\mathbf{x}_T-\mathbf{x}_S : T\in\cS_{n,q},\ d(S,T)\le r\}.
\]

\begin{lemma}[Containment in the ideal difference set]
\label{lem:Sr-containment}
For every center $S\in\cS_{n,q}$ and every radius $r\ge 0$,
\[
   \Delta_r(S)\subseteq S_{q-1}(r,r).
\]
Consequently,
\[
   |B_r(S)| = |\Delta_r(S)| \le |S_{q-1}(r,r)|.
\]
\end{lemma}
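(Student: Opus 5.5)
The argument is a direct verification of the defining conditions of $S_{q-1}(r,r)$, followed by an injectivity remark for the cardinality statement.

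\textbf{Containment.} Fix $\mathbf{z}\in\Delta_r(S)$, so $\mathbf{z}=\mathbf{x}_T-\mathbf{x}_S$ for some $T\in\cS_{n,q}$ with $d(S,T)\le r$. I check the three conditions in the definition of $S_{q-1}(r,r)$ in turn.

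\begin{itemize}
\item $\mathbf{z}\in\mathbb{Z}^q$, since both multiplicity vectors are integral.
\item $\sum_i z_i=|T|-|S|=n-n=0$.
\item Using the decomposition $\mathbf{z}=\mathbf{z}^+-\mathbf{z}^-$ from Section~\ref{subsec:ball-general-formula}, we have $\sum_i z_i^+-\sum_i z_i^-=\sum_i z_i=0$. Combined with $\sum_i|z_i|=\sum_i z_i^++\sum_i z_i^-$, this gives
\[
   \sum_i z_i^{+}=\sum_i z_i^{-}=\tfrac12\|\mathbf{x}_T-\mathbf{x}_S\|_1=d(S,T)\le r.
\]
\end{itemize}

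Hence $\mathbf{z}\in S_{q-1}(r,r)$.

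\textbf{Cardinality.} The map $T\mapsto \mathbf{x}_T-\mathbf{x}_S$ is a bijection from $B_r(S)$ onto $\Delta_r(S)$. It is injective because $T$ is recovered as $\mathbf{x}_S+\mathbf{z}$, and it is surjective by the definition of $\Delta_r(S)$. Therefore $|B_r(S)|=|\Delta_r(S)|$, and the containment gives $|\Delta_r(S)|\le|S_{q-1}(r,r)|$.

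There is no substantive obstacle here. The only point requiring care is the identity $\sum z_i^+=\sum z_i^-=d(S,T)$, which follows from the zero-sum condition together with the $\ell_1$ form of the metric recalled in Section~\ref{sec:pre}. It is also worth noting why the inclusion can be strict: $S_{q-1}(r,r)$ omits the constraint $\mathbf{z}^-\le\mathbf{x}_S$. This is the boundary truncation that the subsequent results address.
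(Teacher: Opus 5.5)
Your proposal is correct and follows the paper's proof essentially verbatim: both check the zero-sum condition and use the decomposition $\mathbf{z}=\mathbf{z}^+-\mathbf{z}^-$ to obtain $\sum_i z_i^+=\sum_i z_i^-=d(S,T)\le r$. Your explicit bijection $T\mapsto \mathbf{x}_T-\mathbf{x}_S$, which justifies $|B_r(S)|=|\Delta_r(S)|$, spells out a step the paper leaves implicit.
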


\begin{proof}
Let $\mathbf{z}=\mathbf{x}_T-\mathbf{x}_S$ for some $T\in B_r(S)$. By the decomposition $\mathbf{z}=\mathbf{z}^{+}-\mathbf{z}^{-}$ introduced earlier, we have
\[
   \sum_{i=0}^{q-1} z_i^{+}
   =
   \sum_{i=0}^{q-1} z_i^{-}
   =
   d(S,T).
\]
Since $T\in B_r(S)$, it follows that
\[
   \sum_{i=0}^{q-1} z_i^{+} \le r
   \qquad\text{and}\qquad
   \sum_{i=0}^{q-1} z_i^{-} \le r.
\]
Moreover, because both $S$ and $T$ have total multiplicity $n$,
\[
   \sum_{i=0}^{q-1} z_i = 0.
\]
Hence $\mathbf{z}\in S_{q-1}(r,r)$, proving the claimed containment.
\end{proof}

\begin{remark}
Every difference vector $\mathbf{z}\in\Delta_r(S)$ satisfies the zero-sum constraint because both the center and the received multiset have the same total size. Thus $S_{q-1}(r,r)$ should be viewed as an \emph{ideal ambient container} for feasible difference vectors. Equality need not hold in general: near the boundary of the simplex, some vectors in $S_{q-1}(r,r)$ violate the feasibility condition $\mathbf{z}^{-}\le \mathbf{x}_S$ and therefore do not correspond to valid points of $B_r(S)$. This boundary truncation will be important later in Section~\ref{sec:vol-bounds}.
\end{remark}

The cardinality of $S_{q-1}(r^{+},r^{-})$ is given explicitly in
\cite[Lemma~1]{KovacevicTan2018} by
\[
   |S_{q-1}(r^{+},r^{-})|
   =
   \sum_{j=0}^{r^{+}}
      \binom{q-1}{j}
      \binom{r^{+}}{j}
      \binom{ r^{-}+q-1-j}{q-1-j}.
\]
In the symmetric case $r^{+}=r^{-}=r$, this agrees with the one-variable
specialization in Corollary~\ref{cor:Sr-explicit} below.
\begin{corollary}[Interior centers attain the ideal upper bound]
\label{cor:interior-ball}
If the center $S$ satisfies
\[
   x_i \ge r \qquad \text{for all } i\in\{0,\dots,q-1\},
\]
then
\[
   \Delta_r(S)=S_{q-1}(r,r)
   \qquad\text{and}\qquad
   |B_r(S)|=|S_{q-1}(r,r)|.
\]
In particular, such centers exist whenever $n\ge qr$.
\end{corollary}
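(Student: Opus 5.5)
By Lemma~\ref{lem:Sr-containment} we already have $\Delta_r(S)\subseteq S_{q-1}(r,r)$ for every center, so the plan is to prove only the reverse inclusion under the hypothesis $x_i\ge r$ for all $i$. The cardinality statement then follows from the bijection $T\mapsto \mathbf{x}_T-\mathbf{x}_S$ between $B_r(S)$ and $\Delta_r(S)$, which holds because $\mathbf{x}_S$ is fixed.

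For the reverse inclusion, take $\mathbf{z}\in S_{q-1}(r,r)$ and set $\mathbf{y}\eqdef\mathbf{x}_S+\mathbf{z}$. We must show that $\mathbf{y}$ is the multiplicity vector of some $T\in\cS_{n,q}$ with $d(S,T)\le r$.

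\emph{Sum.} We have $\sum_i y_i=n+\sum_i z_i=n$.

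\emph{Nonnegativity.} This is the only place the hypothesis enters. For each $i$,
\[
z_i^{-}\le\sum_{k=0}^{q-1}z_k^{-}\le r\le x_i,
\]
so $y_i=x_i+z_i^{+}-z_i^{-}\ge 0$. This is exactly the feasibility condition $\mathbf{z}^{-}\le\mathbf{x}_S$ from Section~\ref{subsec:ball-general-formula}.

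\emph{Distance.} Since $\sum_i z_i=0$, we get $\sum_i z_i^{+}=\sum_i z_i^{-}$. Hence
\[
d(S,T)=\tfrac12\|\mathbf{z}\|_1=\sum_i z_i^{+}\le r.
\]
Therefore $\mathbf{z}\in\Delta_r(S)$.

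For the existence claim when $n\ge qr$, write $n=qr+m$ with $m\ge 0$. Take the center with multiplicities $(r+m,r,\dots,r)$, or the most balanced vector, whose entries are all $\lfloor n/q\rfloor\ge r$. Every coordinate is at least $r$, so the hypothesis holds.

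None of these steps is a real obstacle. The only point requiring care is that the bound $z_i^{-}\le r$ comes from the total negative mass $\sum_k z_k^{-}\le r$, not from a coordinatewise constraint. With this observation the argument is a direct verification.
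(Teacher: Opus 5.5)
Your proof is correct and follows essentially the same route as the paper: combine Lemma~\ref{lem:Sr-containment} with the observation $z_i^-\le\sum_k z_k^-\le r\le x_i$, which makes the feasibility constraint $\mathbf{z}^-\le\mathbf{x}_S$ automatic; you also spell out the sum, distance, and existence checks that the paper leaves implicit. One tiny wording fix: the entries of a most balanced vector are all \emph{at least} $\lfloor n/q\rfloor$ rather than all equal to it, though your explicit choice $(r+m,r,\dots,r)$ already settles existence.
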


\begin{proof}
If $x_i\ge r$ for every $i$, then for every $\mathbf{z}\in S_{q-1}(r,r)$ one has
$\sum_i z_i^-\le r$ and therefore $z_i^-\le r\le x_i$ coordinatewise. Thus the
feasibility constraint $\mathbf{z}^-\le \mathbf{x}_S$ is automatic, so every
vector in $S_{q-1}(r,r)$ is realized by a valid multiset in $B_r(S)$.
Combined with Lemma~\ref{lem:Sr-containment}, this yields the claim.
\end{proof}

If some coordinate $x_i<r$, then certain vectors in $S_{q-1}(r,r)$ violate
the feasibility condition
$\mathbf{z}^{-}\le\mathbf{x}_S$.
In this regime the containment of Lemma~\ref{lem:Sr-containment} is strict, and
\[
   |B_r(S)| < |S_{q-1}(r,r)|.
\]
Thus, while $S_{q-1}(r,r)$ always provides a universal combinatorial upper bound
on the size of a radius-$r$ ball, this bound is attained only when the center is
sufficiently far from the boundary of $\cS_{n,q}$.

\paragraph{Example (ternary case, $n=6$).}
Consider $\cS_{6,3}$, whose cardinality is
\[
   |\cS_{6,3}|=\binom{8}{2}=28.
\]
We compare radius-$r$ ball sizes for three representative centers:
an extreme center $(6,0,0)$, a moderately unbalanced center $(3,2,1)$,
and the balanced center $(2,2,2)$.
A direct enumeration yields:
\[
\begin{array}{c|ccc}
r
& |B_r(6,0,0)|
& |B_r(3,2,1)|
& |B_r(2,2,2)| \\ \hline
0 & 1  & 1  & 1 \\
1 & 3  & 7  & 7 \\
2 & 6  & 16 & 19 \\
3 & 10 & 24 & 25 \\
4 & 15 & 27 & 28
\end{array}
\]

The extreme center attains the minimal ball size for all $r<n$.
For $r=1$ the balanced and unbalanced interior centers tie.
Starting from $r=2$, the balanced center $(2,2,2)$ strictly maximizes the
ball size and reaches the entire space earlier.
This example illustrates the effect of boundary truncation and is consistent
with the general maximality statement proved below.

\begin{definition}[Most balanced multiset]
A multiset $S\in\cS_{n,q}$ with multiplicity vector
$\mathbf{x}_S=(x_0,\dots,x_{q-1})$ is called \emph{most balanced} if
\[
   \max_{0\le i\le q-1} x_i \;-\; \min_{0\le i\le q-1} x_i \;\le\; 1.
\]
Equivalently, letting
\[
   a \eqdef \left\lfloor \frac{n}{q} \right\rfloor,
   \qquad
   b \eqdef \left\lceil \frac{n}{q} \right\rceil = a+1,
\]
there exists an integer $k\in\{0,1,\dots,q-1\}$ such that
\[
   \{x_0,\dots,x_{q-1}\}
   =
   \{\underbrace{b,\dots,b}_{k\ \text{times}},
     \underbrace{a,\dots,a}_{q-k\ \text{times}}\},
\]
where necessarily $k = n - qa$.
\end{definition}

\begin{lemma}[Most balanced centers maximize deletion balls]
\label{lem:max-balanced}
For every $n,q,r$, a most balanced multiset $S\in\cS_{n,q}$ maximizes
$|B_r(S)|$ among all centers in $\cS_{n,q}$.
\end{lemma}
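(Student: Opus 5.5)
We reuse the shared local counting setup from the proof of Lemma~\ref{lem:extreme-min-ball}, now running the move in the opposite direction. Take a center $S$ that is not most balanced. Then there are coordinates $i,j$ with $a\eqdef x_i\ge x_j+2$; write $b\eqdef x_j$, so $a\ge b+2$. Let $S''$ be obtained by the \emph{balancing move} $(a,b)\mapsto(a-1,b+1)$. We aim to show $|B_r(S'')|\ge|B_r(S)|$. Since the coordinate spread $\sum_i x_i^2$ strictly decreases under each such move, finitely many moves reach a most balanced vector. All most balanced vectors are permutations of one another, so they have equal ball sizes, and the lemma follows.

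By \eqref{eq:local-ball-sum}, the set of feasible outside deletion patterns $\bar{\mathbf u}$ depends only on the coordinates outside $\{i,j\}$, which are unchanged. So it suffices to prove, for every fixed $(t,\ell)$ and budget $c=r-t$, that
\[
C_{a-1,b+1}(t,\ell)\ge C_{a,b}(t,\ell).
\]
The pair $(a-1,b+1)$ with $a-1\ge b+1$ is exactly the ``$(a',b')$'' of the previous proof, with $a'=a-1$ and $b'=b+1$. Indeed, the unbalancing move sends $(a-1,b+1)$ to $(a,b)$, and the hypothesis $a'\ge b'\ge1$ holds. The computation in the proof of Lemma~\ref{lem:extreme-min-ball} then gives directly
\[
C_{a,b}-C_{a-1,b+1}
=\mathbf 1_{\{c\ge a\}}W_1(a)-\mathbf 1_{\{c\ge b+1\}}W_1(b+1)-\sum_{h=b+2}^{\min(c,a)}W_2(h)\le 0 .
\]
This uses the Pascal telescoping $W_1(a)-W_1(b+1)=\sum_{h=b+2}^{a}W_2(h)$ and the case split $c<b+1$, $b+1\le c<a$, $c\ge a$ already carried out there. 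Thus the per-pattern inequality is literally the previous lemma's inequality read backwards, and summing over $\bar{\mathbf u}$ gives the claim.

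The main obstacle is checking that the local decomposition is sound, since the previous proof used it somewhat informally. Specifically, $W_1,W_2$ must correctly count the placements of the positive part $\mathbf z^+$ of total size $t+h$ onto coordinates outside $\supp(\mathbf z^-)$, and the per-pattern sum must range over $|\mathbf z^-|\le r$ consistently. Two details need care. First, when $q-\ell-2\le 0$ or $q-\ell-3<0$, the binomial conventions must make the Pascal identity $W_1(h)-W_1(h-1)=W_2(h)$ still valid. This holds if we use the convention $\binom{m}{-1}=0$ and $\binom{m}{0}=1$ for $m\ge0$, and I would verify the degenerate cases $\ell=q-2$ and $\ell=q-1$ by hand. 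Second, the case $r\ge n$ is trivial, since every ball is the whole space. I would handle it separately, because the statement is claimed for all $r$, and the local argument need not be invoked there.
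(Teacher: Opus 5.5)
Your proposal is correct and follows the paper's proof. It uses the same balancing move $(a,b)\mapsto(a-1,b+1)$, the same per-pattern comparison of $C_{a-1,b+1}(t,\ell)$ with $C_{a,b}(t,\ell)$ through \eqref{eq:local-ball-sum}, and the same Pascal telescoping with a case split on $c$. The differences are only presentational: you get the local inequality by re-indexing the computation of Lemma~\ref{lem:extreme-min-ball} with $(a',b')=(a-1,b+1)$ (legitimate since $a'\ge b'\ge 1$) instead of recomputing the differences of $A$ and $B$, and you spell out two points the paper leaves implicit, namely termination via $\sum_i x_i^2$ and the equality of ball sizes across all most balanced centers by alphabet symmetry.
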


\begin{proof}
Let $S\in\cS_{n,q}$ have multiplicity vector
$\mathbf{x}=(x_0,\dots,x_{q-1})$. If $\mathbf{x}$ is not most balanced, then
there exist indices $i,j$ such that
\[
   a\eqdef x_i \ge x_j+2 \eqdef b+2.
\]
Let $S'$ be the center obtained by replacing the pair $(a,b)$ with
$(a-1,b+1)$ and leaving all other coordinates unchanged. It suffices to show
that
\[
   |B_r(S')|\ge |B_r(S)|.
\]

Using the shared setup before Lemma~\ref{lem:extreme-min-ball}, and in
particular \eqref{eq:local-ball-sum}, we may compare the two centers through
$C_{a,b}(t,\ell)$ and $C_{a-1,b+1}(t,\ell)$ for the same outside deletion
pattern \eqref{eq:local-outside-pattern}. Since the first term in
\eqref{eq:local-Cab} is independent of the pair, it remains only to compare the
contributions for $h\ge1$. A direct check gives
\[
   A_{a-1,b+1}(h)-A_{a,b}(h)
   =
   \mathbf{1}_{\{h=b+1\}}-\mathbf{1}_{\{h=a\}},
\]
and
\[
   B_{a-1,b+1}(h)-B_{a,b}(h)
   =
   \begin{cases}
      1, & b+2\le h\le a,\\
      0, & \text{otherwise}.
   \end{cases}
\]
Therefore,
\[
\begin{aligned}
   C_{a-1,b+1}(t,\ell)-C_{a,b}(t,\ell)
   &=
   \mathbf{1}_{\{c\ge b+1\}}W_1(b+1)
   -
   \mathbf{1}_{\{c\ge a\}}W_1(a) \\
   &\qquad +
   \sum_{h=b+2}^{\min(c,a)} W_2(h).
\end{aligned}
\]
Again using Pascal's identity,
\[
   W_1(h)-W_1(h-1)=W_2(h) \qquad (h\ge1),
\]
we obtain
\[
   W_1(a)-W_1(b+1)=\sum_{h=b+2}^{a} W_2(h).
\]
If $c<a$, then
\[
   C_{a-1,b+1}(t,\ell)-C_{a,b}(t,\ell)
   =
   \mathbf{1}_{\{c\ge b+1\}}W_1(b+1)+\sum_{h=b+2}^{c} W_2(h)
   \ge 0.
\]
If $c\ge a$, then
\[
\begin{aligned}
   C_{a-1,b+1}(t,\ell)-C_{a,b}(t,\ell)
   &=
   W_1(b+1)-W_1(a)+\sum_{h=b+2}^{a} W_2(h) \\
   &= 0.
\end{aligned}
\]
Hence for every outside deletion pattern,
\[
   C_{a-1,b+1}(t,\ell)\ge C_{a,b}(t,\ell).
\]
Summing over all outside deletion patterns in \eqref{eq:local-ball-sum} yields
\[
   |B_r(S')|\ge |B_r(S)|.
\]

Starting from an arbitrary multiplicity vector of total sum $n$, repeated
balancing moves eventually produce a most balanced vector, and at each step the
radius-$r$ ball size does not decrease. Consequently, a most balanced multiset
maximizes $|B_r(S)|$ among all centers in $\cS_{n,q}$.
\end{proof}

\begin{remark}[Interior case and boundary truncation]
The lemma identifies the correct maximizers globally. In the fully interior case
$x_i\ge r$ for all $i$, the maximal value equals the universal upper bound
$|S_{q-1}(r,r)|$. When $n<qr$, no center is fully interior and the maximal ball
size is smaller because of boundary truncation; nevertheless, the lemma shows
that the maximizing centers are still the most balanced ones.
\end{remark}

\subsection{Average ball size via generating functions}
\label{subsec:avg-ball-general}

We now turn to the \emph{average} radius-$r$ deletion ball size in $\cS_{n,q}$.
The key point is that the same generating-function framework used for individual
centers also yields an exact formula for the number of ordered pairs of multisets
at a prescribed distance, and hence for the average ball size itself.

Recall from Section~\ref{sec:pre} that
\[
   \overline{B_r}(n,q)
   =
   \frac{1}{|\cS_{n,q}|}
   \sum_{S\in\cS_{n,q}} |B_r(S)|
\]
and that $A_{n,q}(m)$ denotes the number of ordered pairs of multisets in
$\cS_{n,q}$ at deletion distance exactly $m$. Then
\[
   \sum_{S\in\cS_{n,q}} |B_r(S)|
   = \sum_{m=0}^{r} A_{n,q}(m).
\]
It is therefore natural to introduce the bivariate generating function
\[
   Q_q(u,z)
   \;\eqdef\;
   \sum_{n\ge 0}\sum_{m\ge 0} A_{n,q}(m)\,u^n z^m .
\]

Recall from Section~\ref{subsec:ball-general-formula} that for a fixed center
$S\in\cS_{n,q}$ with multiplicity vector
$\mathbf{x}_S=(x_0,\dots,x_{q-1})$, the radius-$r$ ball size admits the
representation
\[
   |B_r(S)|
   =
   \sum_{m=0}^{r}
      [s^m \tau^m]\,
      F_S(s,\tau),
\]
where
\[
   F_S(s,\tau)
   =
   \prod_{i=0}^{q-1}
   F_{x_i}(s,\tau),
   \qquad
   F_x(s,\tau)
   =
   1+\sum_{k\ge1}s^k+\sum_{k=1}^{x}\tau^k .
\]

As before, define
\[
   G(u;s,\tau)
   \;\eqdef\;
   \sum_{x\ge0} u^x\,F_x(s,\tau).
\]
Using
\[
   \sum_{k\ge1}s^k=\frac{s}{1-s},
   \qquad
   \sum_{k=1}^{x}\tau^k=\frac{\tau(1-\tau^{x})}{1-\tau},
\]
one finds
\[
\begin{aligned}
G(u;s,\tau)
&=
\sum_{x\ge0}u^x
\left(
1+\frac{s}{1-s}+\frac{\tau(1-\tau^x)}{1-\tau}
\right) \\[0.5ex]
&=
\left(\frac{1}{1-s}+\frac{\tau}{1-\tau}\right)\frac{1}{1-u}
-
\frac{\tau}{(1-\tau)(1-u\tau)}.
\end{aligned}
\]
Since
\[
   [u^n]\bigl(G(u;s,\tau)\bigr)^q
   =
   \sum_{S\in\cS_{n,q}} F_S(s,\tau),
\]
we obtain
\[
   Q_q(u,z)
   =
   \sum_{m\ge0} z^m [s^m\tau^m]\bigl(G(u;s,\tau)\bigr)^q.
\]

The diagonal extraction in $(s,\tau)$ can be reduced to a one-variable constant
term. Indeed, if
\[
   H(s,\tau)=\sum_{a,b\ge0} h_{a,b}s^a\tau^b,
\]
then
\[
   [s^0]\,H\!\left(s,\frac{z}{s}\right)
   =
   \sum_{a,b\ge0} h_{a,b} z^b [s^{a-b}]
   =
   \sum_{m\ge0} h_{m,m} z^m
   =
   \sum_{m\ge0} z^m [s^m\tau^m] H(s,\tau).
\]
Applying this identity with $H(s,\tau)=\bigl(G(u;s,\tau)\bigr)^q$ yields
\[
   Q_q(u,z)
   =
   [s^0]\,
   \bigl(G(u;s,z/s)\bigr)^q.
\]

We now simplify the single-variable expression. Substituting $\tau=z/s$ into
the formula for $G$ and simplifying gives
\[
   G\!\left(u;s,\frac{z}{s}\right)
   =
   \frac{s(1-uz)}{(1-u)(1-s)(s-uz)}.
\]
Therefore,
\[
   Q_q(u,z)
   =
   [s^0]\,
   \left(
      \frac{s(1-uz)}{(1-u)(1-s)(s-uz)}
   \right)^q .
\]

The constant term can now be extracted explicitly.

\begin{theorem}[Exact pair enumerator and average balls]
\label{thm:avg-ball-general}
Let
\[
   P_{q-1}(x)
   \;\eqdef\;
   \sum_{j=0}^{q-1}\binom{q-1}{j}^2 x^j,
\]
and define coefficients $c_{q,m}$ by
\[
   \frac{P_{q-1}(x)}{(1-x)^{q-1}}
   =
   \sum_{m\ge0} c_{q,m}x^m.
\]
Equivalently,
\[
   c_{q,m}
   =
   \sum_{j=0}^{\min(m,q-1)}
   \binom{q-1}{j}^2
   \binom{m-j+q-2}{q-2}.
\]
Then
\[
   Q_q(u,z)
   =
   \frac{P_{q-1}(uz)}{(1-u)^q(1-uz)^{q-1}}
   =
   \frac{1}{(1-u)^q}\sum_{m\ge0} c_{q,m}(uz)^m.
\]
Consequently,
\[
   A_{n,q}(m)
   =
   c_{q,m}\binom{n-m+q-1}{q-1},
   \qquad 0\le m\le n,
\]
and
\[
   \overline{B_r}(n,q)
   =
   \frac{1}{\binom{n+q-1}{q-1}}
   \sum_{m=0}^{r}
   c_{q,m}\binom{n-m+q-1}{q-1}.
\]
\end{theorem}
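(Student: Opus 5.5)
We start from the identity derived just before the statement,
\[
   Q_q(u,z)=[s^0]\left(\frac{s(1-uz)}{(1-u)(1-s)(s-uz)}\right)^q ,
\]
and factor out everything independent of $s$. Writing $w\eqdef uz$, we need
\[
   [s^0]\,\frac{s^q}{(1-s)^q(s-w)^q},
\]
where the expansion is in the region $|w|<|s|<1$. This region is dictated by the convergence of $\sum_k s^k$ and of $\sum_k \tau^k=\sum_k (z/s)^k$, with $u$ also treated as small. In this region we write $s-w=s(1-w/s)$, so the $s^q$ cancels and the quantity becomes
\[
   [s^0]\,(1-s)^{-q}(1-w/s)^{-q}
   =\sum_{k\ge0}\binom{k+q-1}{q-1}^2 w^k .
\]
This holds because the constant term pairs $s^k$ from the first factor with $s^{-k}$ from the second, and each contributes the coefficient $\binom{k+q-1}{q-1}$.

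\textbf{Main step.} The obstacle is to show that
\[
   \sum_{k\ge0}\binom{k+q-1}{q-1}^2 w^k=\frac{P_{q-1}(w)}{(1-w)^{2q-1}}.
\]
Once this is available,
\[
   Q_q=\frac{(1-w)^q}{(1-u)^q}\cdot\frac{P_{q-1}(w)}{(1-w)^{2q-1}}=\frac{P_{q-1}(uz)}{(1-u)^q(1-uz)^{q-1}},
\]
as claimed.

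To prove the identity, I would first use $\binom{k+q-1}{q-1}=\binom{k+q-1}{k}$. Vandermonde then gives
\[
   \binom{k+q-1}{q-1}=\sum_j\binom{q-1}{j}\binom{k}{j}.
\]
This turns the square into a sum over $j$ of $\binom{q-1}{j}\binom{k}{j}\binom{k+q-1}{q-1}$. Next I would apply
\[
   \binom{k}{j}\binom{k+q-1}{q-1}=\binom{k+q-1}{q-1+j}\binom{q-1+j}{j},
\]
together with
\[
   \sum_k\binom{k+q-1}{q-1+j}w^k=\frac{w^j}{(1-w)^{q+j}}.
\]
The resulting expression is not yet in the form of $P_{q-1}$, so this route needs one more reorganization.

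A cleaner route is the residue calculus on $[s^0]$ directly. The constant term is the residue at $s=0$ of
\[
   \frac{s^{q-1}}{(1-s)^q(s-w)^q}
\]
for the small circle $|w|<|s|<1$. This equals $\operatorname{Res}_{s=w}$, since the form decays at infinity and the remaining pole is at $s=1$ outside the contour. I would compute that residue with the Leibniz rule:
\[
   \frac{1}{(q-1)!}\frac{d^{q-1}}{ds^{q-1}}\Bigl[s^{q-1}(1-s)^{-q}\Bigr]_{s=w}
   =\sum_{j}\binom{q-1}{j}\,\binom{q-1}{q-1-j} w^{j}\cdot \frac{\binom{2q-2-j}{q-1}\ldots}{\ldots}.
\]
Simplifying, this should yield $\sum_j\binom{q-1}{j}^2 w^j(1-w)^{-(2q-1)}$. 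If the residue bookkeeping becomes messy, I will instead fall back on the known Narayana-type identity
\[
   \sum_k\binom{k+n}{n}^2w^k=\frac{\sum_j\binom{n}{j}^2w^j}{(1-w)^{2n+1}},
\]
with $n=q-1$. This identity is the Hilbert series of the Segre product of two polynomial rings. It can also be proved by checking that both sides satisfy the same first-order recurrence in $n$, or via the symmetric expansion $\binom{k+n}{n}=\sum_j\binom{n}{j}\binom{k}{j}$ applied in both factors together with a Chu–Vandermonde collapse.

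\textbf{Remaining consequences.} First, expanding $P_{q-1}(x)(1-x)^{-(q-1)}$ gives $c_{q,m}$ by convolution with
\[
   (1-x)^{-(q-1)}=\sum_i\binom{i+q-2}{q-2}x^i .
\]
Second, we extract $[u^nz^m]$. The factor $z^m$ forces the term $c_{q,m}u^m$, and the remaining factor gives
\[
   [u^{n-m}](1-u)^{-q}=\binom{n-m+q-1}{q-1}.
\]
This is valid for $0\le m\le n$; for $m>n$ the count vanishes, consistent with the formula. Finally, the average-ball formula follows from
\[
   \sum_S|B_r(S)|=\sum_{m\le r}A_{n,q}(m),
\]
after dividing by $|\cS_{n,q}|$.
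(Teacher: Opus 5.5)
Your proposal is correct and is essentially the paper's proof: factor $(1-uz)^q/(1-u)^q$ out of the constant term, pair $s^k$ from $(1-s)^{-q}$ with $s^{-k}$ from $(1-uz/s)^{-q}$ to get $\sum_k\binom{k+q-1}{q-1}^2(uz)^k$, apply the classical identity $\sum_k\binom{k+q-1}{q-1}^2x^k=P_{q-1}(x)/(1-x)^{2q-1}$, and then read off $c_{q,m}$, $A_{n,q}(m)$ and $\overline{B_r}(n,q)$ by coefficient extraction. The paper cites that identity without proof, so your fallback matches it; if you want to complete your Vandermonde route, multiply $\sum_j\binom{q-1}{j}\binom{q-1+j}{j}w^j(1-w)^{-(q+j)}$ (which is also what your residue computation produces) by $(1-w)^{2q-1}$, expand $(1-w)^{q-1-j}$, use $\binom{q-1}{j}\binom{q-1-j}{i-j}=\binom{q-1}{i}\binom{i}{j}$, and finish with the alternating Chu--Vandermonde sum $\sum_j(-1)^{i-j}\binom{i}{j}\binom{q-1+j}{j}=\binom{q-1}{i}$.
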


\begin{proof}
From the constant-term representation above,
\[
   Q_q(u,z)
   =
   [s^0]\,
   \left(
      \frac{s(1-uz)}{(1-u)(1-s)(s-uz)}
   \right)^q
   =
   \frac{(1-uz)^q}{(1-u)^q}
   [s^0]\,
   \frac{s^q}{(1-s)^q(s-uz)^q}.
\]
Now write
\[
   \frac{1}{(s-uz)^q}
   =
   s^{-q}\left(1-\frac{uz}{s}\right)^{-q}
   =
   s^{-q}\sum_{m\ge0}\binom{q+m-1}{m}(uz)^m s^{-m},
\]
and also
\[
   \frac{1}{(1-s)^q}
   =
   \sum_{\ell\ge0}\binom{q+\ell-1}{\ell}s^\ell.
\]
Multiplying these two series, the coefficient of $s^0$ is obtained precisely
when $\ell=m$. Hence
\[
   Q_q(u,z)
   =
   \frac{(1-uz)^q}{(1-u)^q}
   \sum_{m\ge0}\binom{q+m-1}{m}^2(uz)^m.
\]

Using the classical identity
\[
   \sum_{m\ge0}\binom{q+m-1}{m}^2 x^m
   =
   \frac{P_{q-1}(x)}{(1-x)^{2q-1}},
\]
we obtain
\[
   Q_q(u,z)
   =
   \frac{P_{q-1}(uz)}{(1-u)^q(1-uz)^{q-1}}.
\]

Next, expand
\[
   \frac{P_{q-1}(uz)}{(1-uz)^{q-1}}
   =
   \sum_{m\ge0} c_{q,m}(uz)^m.
\]
Then
\[
   Q_q(u,z)
   =
   \frac{1}{(1-u)^q}\sum_{m\ge0} c_{q,m}(uz)^m,
\]
so extracting the coefficient of $u^n z^m$ gives
\[
   A_{n,q}(m)
   =
   c_{q,m}[u^{n-m}](1-u)^{-q}
   =
   c_{q,m}\binom{n-m+q-1}{q-1}.
\]
Finally,
\[
   \sum_{S\in\cS_{n,q}}|B_r(S)|
   =
   \sum_{m=0}^{r}A_{n,q}(m),
\]
and division by $|\cS_{n,q}|=\binom{n+q-1}{q-1}$ yields the formula for
$\overline{B_r}(n,q)$.

The explicit expression for $c_{q,m}$ follows by expanding
\(P_{q-1}(x)(1-x)^{-(q-1)}\) and collecting the coefficient of \(x^m\).
\end{proof}

The preceding theorem shows that the average-ball problem is governed by a
one-variable generating function rather than by a genuinely bivariate diagonal
at the final stage. In particular, it gives an exact finite-$n$ formula for the
distance distribution and for the average ball size.

\begin{corollary}[Exact size of the ideal difference set]
\label{cor:Sr-explicit}
For every $q\ge 2$ and $r\ge 0$,
\[
   |S_{q-1}(r,r)|
   =
   \sum_{m=0}^{r} c_{q,m}
   =
   [x^r]\frac{P_{q-1}(x)}{(1-x)^q}
   =
   \sum_{j=0}^{\min(r,q-1)}
      \binom{q-1}{j}^2
      \binom{r-j+q-1}{q-1}.
\]
In particular, this recovers the exact formula for $|S_{q-1}(r,r)|$ already
obtained by Kova\v{c}evi\'c and Tan~\cite[Lemma~1]{KovacevicTan2018}.
\end{corollary}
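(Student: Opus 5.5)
The corollary follows by applying the pair enumerator of Theorem~\ref{thm:avg-ball-general} to an interior center and comparing with Corollary~\ref{cor:interior-ball}; the remaining equalities are routine coefficient extraction.

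\textbf{Step 1: count pairs at distance $m$ in the interior.} Fix $m\le r$ and take $n$ large, say $n\ge qr$. The number of vectors $\mathbf{z}$ with $\sum_i z_i=0$ and $\sum_i z_i^+=\sum_i z_i^-=m$ is independent of $n$; call it $N_m$. For a pair $(S,T)$ at distance $m$ with $\mathbf{z}=\mathbf{x}_T-\mathbf{x}_S$, feasibility means $\mathbf{x}_S\ge \mathbf{z}^-$ coordinatewise. Substituting $\mathbf{y}=\mathbf{x}_S-\mathbf{z}^-$ turns this into choosing an arbitrary $\mathbf{y}\in\mathbb{Z}_{\ge0}^q$ with $|\mathbf{y}|=n-m$. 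Hence each such $\mathbf{z}$ admits exactly $\binom{n-m+q-1}{q-1}$ centers $S$. Summing over $\mathbf{z}$ gives
\[
A_{n,q}(m)=N_m\binom{n-m+q-1}{q-1}.
\]
Comparing with Theorem~\ref{thm:avg-ball-general}, which gives $A_{n,q}(m)=c_{q,m}\binom{n-m+q-1}{q-1}$ for $m\le n$, and noting that the binomial is positive, we conclude $N_m=c_{q,m}$.

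\textbf{Step 2: sum over $m$.} By definition, $S_{q-1}(r,r)$ is the disjoint union over $m\le r$ of the vectors counted by $N_m$. Therefore
\[
|S_{q-1}(r,r)|=\sum_{m=0}^r c_{q,m}.
\]
This step uses no interior hypothesis: the change of variables in Step~1 works for every $n\ge m$, and we may choose $n\ge r$.

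\textbf{Step 3: closed forms.} Summing the coefficients up to $r$ corresponds to multiplying the generating function by $(1-x)^{-1}$, so
\[
\sum_{m\le r}c_{q,m}=[x^r]\,\frac{P_{q-1}(x)}{(1-x)^{q}}.
\]
Expanding $(1-x)^{-q}=\sum_k\binom{k+q-1}{q-1}x^k$ and pairing each term $\binom{q-1}{j}^2x^j$ of $P_{q-1}$ with $k=r-j$ yields
\[
|S_{q-1}(r,r)|=\sum_{j=0}^{\min(r,q-1)}\binom{q-1}{j}^2\binom{r-j+q-1}{q-1}.
\]
As a consistency check, setting $r^+=r^-=r$ in the Kova\v{c}evi\'c--Tan formula and applying Vandermonde-type manipulations should give the same expression, though this check is not needed for the proof.

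\textbf{Main obstacle.} The key step is the identification $N_m=c_{q,m}$. The translation argument must be carefully justified: $S\mapsto \mathbf{x}_S-\mathbf{z}^-$ must be a bijection onto weak compositions of $n-m$, and the number of such compositions must not depend on $\mathbf{z}$. Once this is established, the rest is routine coefficient extraction.
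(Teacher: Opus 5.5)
Your proof is correct. Step~3 is exactly the paper's argument: sum the coefficients by multiplying by $(1-x)^{-1}$, then expand $(1-x)^{-q}$.

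\textbf{Where you differ: the first equality.} The paper's proof of the corollary takes $|S_{q-1}(r,r)|=\sum_{m\le r}c_{q,m}$ as given. Its only justification appears later, in Lemma~\ref{lem:avg-ball-ideal}. There one takes an interior center ($\min_i x_i\ge r$), whose ball is all of $S_{q-1}(r,r)$ by Corollary~\ref{cor:interior-ball}, and then asserts without derivation that the number of such vectors at exact distance $m$ is $c_{q,m}$. Making that assertion rigorous needs an extra argument, for example repeating the constant-term computation without the variable $u$, or a double count like yours.

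\textbf{What your route buys.} Your substitution $\mathbf{y}=\mathbf{x}_S-\mathbf{z}^-$ is a clean bijection. It gives $A_{n,q}(m)=N_m\binom{n-m+q-1}{q-1}$ for every $n\ge m$, with no interior hypothesis, and comparison with Theorem~\ref{thm:avg-ball-general} then forces $N_m=c_{q,m}$. It also explains combinatorially why $Q_q(u,z)$ factors as $(1-u)^{-q}\sum_m c_{q,m}(uz)^m$: the factor $(1-u)^{-q}$ counts the free vector $\mathbf{y}$. The interior-center route ties the constant directly to the maximal-ball geometry, but leaves this count implicit. Your opening sentence invokes Corollary~\ref{cor:interior-ball}, yet your argument never uses it, so that reference can be dropped.

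\textbf{The Kova\v{c}evi\'c--Tan agreement.} The statement does assert this agreement, and it needs no Vandermonde-type manipulation because it holds term by term:
\[
\binom{r}{j}\binom{r+q-1-j}{q-1-j}=\frac{(r+q-1-j)!}{j!\,(r-j)!\,(q-1-j)!}=\binom{q-1}{j}\binom{r-j+q-1}{q-1}.
\]
In their sum, the terms with $j>q-1$ vanish because of the factor $\binom{q-1}{j}$, so the summation ranges also match.
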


\begin{proof}
Summing the identity
\[
   \frac{P_{q-1}(x)}{(1-x)^{q-1}}
   =
   \sum_{m\ge0} c_{q,m}x^m
\]
from $m=0$ to $m=r$ is equivalent to multiplying by $(1-x)^{-1}$ and extracting
the coefficient of $x^r$, which gives
\[
   |S_{q-1}(r,r)|
   =
   [x^r]\frac{P_{q-1}(x)}{(1-x)^q}.
\]
Expanding $(1-x)^{-q}$ and collecting the coefficient of $x^r$ yields the stated
closed form. The agreement with~\cite[Lemma~1]{KovacevicTan2018} is immediate.
\end{proof}

\begin{corollary}[Immediate specializations]
\label{cor:avg-ball-specializations}
The exact formulas above yield the following explicit cases.

\begin{enumerate}
\item For $q=2$,
\[
   A_{n,2}(0)=n+1,
   \qquad
   A_{n,2}(m)=2(n-m+1)\quad (1\le m\le n),
\]
\[
   \overline{B_r}(n,2)
   =
   (2r+1)-\frac{r(r+1)}{n+1},
\]
and
\[
   |S_1(r,r)|=2r+1.
\]

\item For $q=3$,
\[
   A_{n,3}(0)=\binom{n+2}{2},
   \qquad
   A_{n,3}(m)=6m\binom{n-m+2}{2}\quad (1\le m\le n),
\]
\[
   \overline{B_r}(n,3)
   =
   \frac{1}{\binom{n+2}{2}}
   \left(
      \binom{n+2}{2}
      +
      6\sum_{m=1}^{r} m\binom{n-m+2}{2}
   \right),
\]
and
\[
   |S_2(r,r)|=3r^2+3r+1.
\]
\end{enumerate}
\end{corollary}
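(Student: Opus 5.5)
The plan is to specialize Theorem~\ref{thm:avg-ball-general} and Corollary~\ref{cor:Sr-explicit} to $q=2$ and $q=3$. All that is needed is the coefficients $c_{q,m}$, followed by finite summations.

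\emph{Case $q=2$.} Here $P_1(x)=1+x$, so
\[
\frac{1+x}{1-x}=1+2\sum_{m\ge1}x^m,
\]
which gives $c_{2,0}=1$ and $c_{2,m}=2$ for $m\ge1$. Theorem~\ref{thm:avg-ball-general} then yields
\[
A_{n,2}(0)=\binom{n+1}{1}=n+1,\qquad A_{n,2}(m)=2(n-m+1)\quad(1\le m\le n).
\]
For the average ball, I will sum these counts over $0\le m\le r$:
\[
(n+1)+2\sum_{m=1}^r(n-m+1)=(n+1)(2r+1)-r(r+1).
\]
Dividing by $n+1$ gives $(2r+1)-r(r+1)/(n+1)$. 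Finally, $|S_1(r,r)|=\sum_{m\le r}c_{2,m}=2r+1$ by Corollary~\ref{cor:Sr-explicit}. As a sanity check, this agrees with the binary isometry to $\{0,\dots,n\}$ used earlier.

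\emph{Case $q=3$.} Here $P_2(x)=1+4x+x^2$, and we need the expansion of $(1+4x+x^2)/(1-x)^2$. Using $(1-x)^{-2}=\sum_m (m+1)x^m$, the coefficient of $x^m$ for $m\ge2$ is
\[
(m+1)+4m+(m-1)=6m.
\]
Direct inspection gives $c_{3,0}=1$ and $c_{3,1}=2+4=6$, so $c_{3,m}=6m$ holds for all $m\ge1$. Substituting into Theorem~\ref{thm:avg-ball-general} gives
\[
A_{n,3}(m)=6m\binom{n-m+2}{2}\quad(1\le m\le n),\qquad A_{n,3}(0)=\binom{n+2}{2}.
\]
The stated formula for $\overline{B_r}(n,3)$ is then just the definition of the average divided by $|\cS_{n,3}|=\binom{n+2}{2}$. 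Likewise,
\[
|S_2(r,r)|=1+6\sum_{m=1}^r m=1+3r(r+1)=3r^2+3r+1.
\]

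There is no real obstacle. The only points needing care are the low-index coefficients $c_{q,0}$ and $c_{q,1}$, where the generic expansion formula must be checked separately, and the restriction $m\le n$ in the formula for $A_{n,q}(m)$. When $r>n$, the binomial $\binom{n-m+q-1}{q-1}$ vanishes for $m>n$ under the paper's convention, so the summed formulas remain valid as written.
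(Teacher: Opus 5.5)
Your proposal is correct and follows the paper's proof exactly. You read off $c_{2,m}$ and $c_{3,m}$ from $P_1(x)/(1-x)$ and $P_2(x)/(1-x)^2$, substitute into Theorem~\ref{thm:avg-ball-general} and Corollary~\ref{cor:Sr-explicit}, and write out the finite sums the paper leaves implicit.

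One caveat on your closing remark. The zero convention does keep the binomial-form expressions valid for all $r$. However, the $q=2$ closed form $(2r+1)-r(r+1)/(n+1)$ comes from summing $2(n-m+1)$ as a polynomial in $m$, and it fails for $r\ge n+2$: for example, $n=1$, $r=3$ gives $1$ rather than $2$. That formula should therefore be stated for $0\le r\le n+1$.
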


\begin{proof}
For $q=2$, one has $P_1(x)=1+x$, so
\[
   \frac{P_1(x)}{1-x}
   =1+2\sum_{m\ge1}x^m.
\]
Hence $c_{2,0}=1$ and $c_{2,m}=2$ for $m\ge1$, which gives the stated formulas.
For $q=3$, one has $P_2(x)=1+4x+x^2$, and
\[
   \frac{P_2(x)}{(1-x)^2}
   =1+6\sum_{m\ge1} m x^m.
\]
Hence $c_{3,0}=1$ and $c_{3,m}=6m$ for $m\ge1$, and the claims follow from
Theorem~\ref{thm:avg-ball-general} and Corollary~\ref{cor:Sr-explicit}.
\end{proof}

\begin{remark}
For $q=4$, the same computation gives $c_{4,0}=1$ and
$c_{4,m}=10m^2+2$ for $m\ge1$, and therefore
\[
   |S_3(r,r)|
   =
   \frac{10r^3+15r^2+11r+3}{3}.
\]
\end{remark}

We now record the asymptotic consequence needed later for the
Gilbert--Varshamov bound.

\begin{lemma}[Average ball size is asymptotically ideal]
\label{lem:avg-ball-ideal}
Fix $q\ge2$ and $r\ge0$, and let $n\to\infty$.
Then
\[
   \overline{B_r}(n,q)
   =
   |S_{q-1}(r,r)| - O\!\left(\frac{1}{n}\right).
\]
More precisely,
\[
   \overline{B_r}(n,q)
   =
   \sum_{m=0}^{r} c_{q,m}
   \frac{\binom{n-m+q-1}{q-1}}{\binom{n+q-1}{q-1}}
   =
   \sum_{m=0}^{r} c_{q,m} + O\!\left(\frac{1}{n}\right),
\]
and
\[
   \sum_{m=0}^{r} c_{q,m}
   =
   |S_{q-1}(r,r)|.
\]
\end{lemma}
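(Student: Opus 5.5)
The plan is to obtain the exact expression for $\overline{B_r}(n,q)$ from Theorem~\ref{thm:avg-ball-general}, and then analyse each term separately as $n\to\infty$ with $q$ and $r$ held fixed. For $n\ge r$ the theorem gives
\[
   \overline{B_r}(n,q)=\sum_{m=0}^{r} c_{q,m}\,\rho_m(n),
   \qquad
   \rho_m(n)\eqdef\frac{\binom{n-m+q-1}{q-1}}{\binom{n+q-1}{q-1}} .
\]
This is already the first displayed equality of the lemma. The identity $\sum_{m=0}^{r}c_{q,m}=|S_{q-1}(r,r)|$ is exactly Corollary~\ref{cor:Sr-explicit}, so nothing new is needed there.

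The remaining task is to show that $\rho_m(n)=1-O(1/n)$ uniformly for $0\le m\le r$. I would write the ratio as a finite product:
\[
   \rho_m(n)=\prod_{k=1}^{q-1}\frac{n-m+k}{n+k}=\prod_{k=1}^{q-1}\Bigl(1-\frac{m}{n+k}\Bigr).
\]
Each factor lies in $[1-m/n,\,1]$. Hence
\[
   1-\frac{(q-1)m}{n}\le \rho_m(n)\le 1
\]
for $n\ge m$, using Bernoulli's inequality $\prod(1-a_k)\ge 1-\sum a_k$ for $a_k\in[0,1]$.

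With this bound the total error is controlled directly. Since every $c_{q,m}$ is nonnegative and independent of $n$,
\[
   0\le |S_{q-1}(r,r)|-\overline{B_r}(n,q)=\sum_{m=0}^{r}c_{q,m}\bigl(1-\rho_m(n)\bigr)\le \frac{q-1}{n}\sum_{m=0}^{r}m\,c_{q,m}.
\]
The right-hand side is $O(1/n)$ with a constant depending only on $q$ and $r$. This simultaneously gives the one-sided statement $\overline{B_r}(n,q)=|S_{q-1}(r,r)|-O(1/n)$.

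The only point needing care is the range restriction $0\le m\le n$ in the formula for $A_{n,q}(m)$. It is harmless because $n\to\infty$ with $r$ fixed, so I will simply assume $n\ge r$ throughout. I do not expect any genuine obstacle: the lemma is a direct asymptotic reading of the exact formula.
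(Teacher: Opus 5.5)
Your proposal is correct and follows essentially the same route as the paper: the exact formula from Theorem~\ref{thm:avg-ball-general}, the product form $\prod_{k=1}^{q-1}\bigl(1-\frac{m}{n+k}\bigr)$ of the binomial ratio, and the finiteness of the sum over $0\le m\le r$. The differences are minor: you cite Corollary~\ref{cor:Sr-explicit} for $\sum_{m\le r}c_{q,m}=|S_{q-1}(r,r)|$, whereas the paper re-argues this identity via interior centers; and your explicit product bound, combined with $c_{q,m}\ge 0$, also fixes the sign of the $O(1/n)$ error, which the paper's $1+O(1/n)$ estimate leaves implicit.
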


\begin{proof}
By Theorem~\ref{thm:avg-ball-general},
\[
   \overline{B_r}(n,q)
   =
   \sum_{m=0}^{r} c_{q,m} R_{n,q}(m),
   \qquad
   R_{n,q}(m)
   \;\eqdef\;
   \frac{\binom{n-m+q-1}{q-1}}{\binom{n+q-1}{q-1}}.
\]
For fixed $m$ and $q$,
\[
   R_{n,q}(m)
   =
   \prod_{j=0}^{q-2}\frac{n-m+1+j}{n+1+j}
   =
   \prod_{j=0}^{q-2}\left(1-\frac{m}{n+1+j}\right)
   =
   1+O\!\left(\frac{1}{n}\right),
\]
uniformly for $0\le m\le r$.
Since the sum over $m=0,\dots,r$ is finite, this gives
\[
   \overline{B_r}(n,q)
   =
   \sum_{m=0}^{r} c_{q,m} + O\!\left(\frac{1}{n}\right).
\]

It remains to identify the constant term \(\sum_{m=0}^{r} c_{q,m}\).
If \(S\in\cS_{n,q}\) satisfies \(\min_i S(i)\ge r\), then the boundary constraint
is inactive for every difference vector of radius at most \(r\), and
Lemma~\ref{lem:ball-size-general} counts exactly all vectors in the ideal
difference set \(S_{q-1}(r,r)\).
Therefore the number of admissible vectors at exact distance \(m\) is \(c_{q,m}\),
and summing over \(m\le r\) gives
\[
   \sum_{m=0}^{r} c_{q,m}
   =
   |S_{q-1}(r,r)|.
\]
This proves the claim.
\end{proof}

\subsection{Ball sizes for $q=2$ and $q=3$}
\label{subsec:balls-q2-q3}

In this subsection we collect explicit formulas for minimal, maximal, and
average radius-$r$ deletion balls for the binary and ternary alphabets.
The average-ball identities are consistent with the immediate specializations in
Corollary~\ref{cor:avg-ball-specializations}; here we place them alongside the
corresponding minimal- and maximal-ball formulas.

\subsubsection{Binary alphabet ($q=2$)}

We start with the following lemma.
\begin{lemma}
\label{lem:binary-distance}
Let $q=2$ and $S,T\in\cS_{n,2}$.
Then the deletion distance depends only on the weights and satisfies
\[
   d(S,T)=|w(S)-w(T)|.
\]
\end{lemma}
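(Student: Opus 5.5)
The plan is to compute directly from the definition of the deletion metric in multiplicity-vector form, $d(S,T)=\tfrac12\|\mathbf{x}_S-\mathbf{x}_T\|_1$, and to use the binary specialization $\mathbf{x}_S=(n-w(S),\,w(S))$ recorded in Section~\ref{sec:pre}. Both multisets have cardinality $n$, so each is determined by its weight.

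Write $w=w(S)$ and $w'=w(T)$. Then
\[
   \mathbf{x}_S-\mathbf{x}_T=\bigl((n-w)-(n-w'),\,w-w'\bigr)=(w'-w,\,w-w').
\]
The two coordinates are negatives of each other, which is just the zero-sum property of difference vectors specialized to $q=2$. Their absolute values are therefore both $|w-w'|$, so $\tfrac12\|\mathbf{x}_S-\mathbf{x}_T\|_1=\tfrac12\cdot 2|w-w'|=|w(S)-w(T)|$. This already shows that the distance depends only on the weights.

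As a consistency check, one can also argue from the original definition $d(S,T)=n-|S\cap T|$. Here $|S\cap T|=\min\{n-w,n-w'\}+\min\{w,w'\}=n-\max\{w,w'\}+\min\{w,w'\}$, and hence $d(S,T)=\max\{w,w'\}-\min\{w,w'\}=|w-w'|$.

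There is no real obstacle in this argument. The only point needing care is justifying the equivalence of the two forms of the metric. That equivalence is already stated in the preliminaries, and the second computation above verifies it directly in the binary case anyway.
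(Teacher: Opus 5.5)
Your proof is correct and is essentially the paper's argument: both write the binary multiplicity vectors as $(n-w,\,w)$, note that the difference vector is $\pm\bigl(w(S)-w(T)\bigr)(1,-1)$, and read off the distance. The paper does this via $\sum_i z_i^+$ and you via $\tfrac12\|\mathbf{x}_S-\mathbf{x}_T\|_1$, which amount to the same thing. Your extra check from $d(S,T)=n-|S\cap T|$ is a small bonus, since it verifies the equivalence of the two metric forms directly instead of citing the preliminaries.
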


\begin{proof}
Write the multiplicity vectors as
\[
   \mathbf{x}_S=(n-w(S),w(S)), \qquad
   \mathbf{x}_T=(n-w(T),w(T)).
\]
Then the difference vector is
\[
   \mathbf{z}=\mathbf{x}_T-\mathbf{x}_S=(w(S)-w(T),\,w(T)-w(S)).
\]
Hence
\[
   d(S,T)=\sum_i z_i^+=|w(S)-w(T)|,
\]
as claimed.
\end{proof}

\begin{lemma}[Binary ball sizes]
\label{lem:q2-balls-summary}
Fix $n\ge 1$ and $0\le r\le n$.
For a center $S\in\cS_{n,2}$ with $w=w_S$, the radius-$r$ ball size is
\[
   |B_r(S)|
   = \min\{n,\;w+r\}-\max\{0,\;w-r\}+1.
\]
Consequently,
\[
   \min_{S\in\cS_{n,2}} |B_r(S)|=r+1,
\]
attained at the extreme centers $w=0$ and $w=n$.
Moreover,
\[
   \max_{S\in\cS_{n,2}} |B_r(S)|
   =
   \begin{cases}
      2r+1, & r\le \lfloor n/2\rfloor,\\[1mm]
      n+1,  & r\ge \lceil n/2\rceil,
   \end{cases}
\]
and the maximum is attained by every center satisfying $r\le w\le n-r$ when
this interval is nonempty, and otherwise by the most balanced centers.

Finally, the average ball size equals
\[
   \overline{B_r}(n,2)
   = (2r+1)-\frac{r(r+1)}{n+1}.
\]
\end{lemma}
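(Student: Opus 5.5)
By Lemma~\ref{lem:binary-distance}, the space $\cS_{n,2}$ is isometric to the integer interval $\{0,1,\dots,n\}$ with the absolute-value metric, via $S\mapsto w(S)$. The plan is to work entirely in this one-dimensional picture.

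\textbf{Step 1: the ball-size formula.} Under the isometry, $B_r(S)$ corresponds to the set of integers $w'\in\{0,\dots,n\}$ with $|w'-w|\le r$. This set is the integer interval $[\max\{0,w-r\},\,\min\{n,w+r\}]$, and counting its integer points gives the stated formula.

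\textbf{Step 2: the minimum.} Write the size as $L(w)+R(w)+1$, where $L(w)=\min\{w,r\}$ counts points to the left of $w$ and $R(w)=\min\{n-w,r\}$ counts points to the right.
\begin{itemize}
\item Each of $L$ and $R$ is nonnegative.
\item Since $n\ge r$, at least one of $w\ge r$ or $n-w\ge r$ holds when $w\ge n/2$ or $w\le n/2$ respectively. Hence $L(w)+R(w)\ge r$.
\item Equality holds at $w=0$ and at $w=n$, where the size is $r+1$.
\end{itemize}
This agrees with Lemma~\ref{lem:extreme-min-ball}, since $\binom{r+1}{1}=r+1$. Alternatively, one could simply cite that lemma.

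\textbf{Step 3: the maximum.} The size is bounded by $2r+1$, because $L,R\le r$, and trivially by $n+1$.
\begin{itemize}
\item If $r\le\lfloor n/2\rfloor$, the interval $r\le w\le n-r$ is nonempty. Every $w$ in it has $L=R=r$, so these centers attain $2r+1$. No other center does, since outside this interval one of $L$ or $R$ is $<r$.
\item If $r\ge\lceil n/2\rceil$, any most balanced center $w\in\{\lfloor n/2\rfloor,\lceil n/2\rceil\}$ satisfies $w-r\le 0$ and $w+r\ge n$. The ball is then the whole interval, of size $n+1$.
\end{itemize}
The two cases overlap only when $n$ is even and $r=n/2$, where $2r+1=n+1$, so the formula is consistent. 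This also matches Lemma~\ref{lem:max-balanced}.

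\textbf{Step 4: the average.} This follows directly from Corollary~\ref{cor:avg-ball-specializations}, part~1. As a sanity check one can sum by hand: the sum over $w$ of $L(w)+R(w)$ equals $2\sum_{w=0}^{n}\min\{w,r\}$. For $r\le n$ this is
\[
2\Bigl(\tfrac{r(r-1)}{2}+r(n-r+1)\Bigr)=2r(n+1)-r(r+1).
\]
Adding $n+1$ for the center terms and dividing by $n+1$ gives $(2r+1)-\tfrac{r(r+1)}{n+1}$.

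The only step needing care is the case analysis in Step~3, namely checking the boundary between the two regimes and the claim about which centers attain the maximum. Everything else is routine counting.
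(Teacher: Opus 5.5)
Your route is essentially the paper's. You identify $\cS_{n,2}$ with $\{0,\dots,n\}$ via the weight, count the integer points of $[\max\{0,w-r\},\min\{n,w+r\}]$, and split into two regimes for the maximum. For the average, your sum $\sum_w\bigl(L(w)+R(w)+1\bigr)$ is exactly the paper's count of ordered pairs $(w_S,w_T)$ with $|w_S-w_T|\le r$, with the arithmetic the paper leaves implicit written out. The argument is correct once one justification in Step~2 is repaired.

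\textbf{The flaw in Step 2.} You claim that $w\ge n/2$ forces $w\ge r$, and that $w\le n/2$ forces $n-w\ge r$. This holds only when $r\le n/2$. For example, $n=4$, $r=3$, $w=2$ gives $w<r$ and $n-w<r$.

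The conclusion $L(w)+R(w)\ge r$ is still true. If $w\ge r$ or $n-w\ge r$, then one of $L,R$ equals $r$. Otherwise $L(w)+R(w)=w+(n-w)=n\ge r$. Equivalently, use $\min\{a,r\}+\min\{b,r\}\ge\min\{a+b,r\}$ with $a+b=n$.

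Your fallback of citing Lemma~\ref{lem:extreme-min-ball} also works, but that lemma assumes $r<n$. The remaining case $r=n$ is trivial, since then every ball is all of $\cS_{n,2}$.

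\textbf{Step 3 is more careful than the paper.} For $r\ge\lceil n/2\rceil$, the paper's text asserts that every radius-$r$ ball is the whole space. That fails at extreme centers: $n=4$, $r=2$, $w=0$ gives a ball of size $3$, not $5$. What the argument actually needs is your observation that most balanced centers satisfy $w-r\le 0$ and $w+r\ge n$.
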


\begin{proof}
Since every feasible difference vector has the form $(t,-t)$, we have
$d(S,T)=|w_T-w_S|$.
Thus
\[
   B_r(S)=\{\,T\in\cS_{n,2}:\ |w_T-w|\le r\,\}.
\]
As there is exactly one multiset of each weight $0,\dots,n$, this gives
\[
   |B_r(S)|
   =\bigl|\{\,j\in\{0,\dots,n\}:\ |j-w|\le r\,\}\bigr|
   =\min\{n,w+r\}-\max\{0,w-r\}+1.
\]

The minimal value $r+1$ is attained at $w=0,n$, in agreement with
Lemma~\ref{lem:extreme-min-ball}.
If $r\le \lfloor n/2\rfloor$, then centers with $r\le w\le n-r$ exist and yield
$|B_r(S)|=2r+1$.
If $r\ge \lceil n/2\rceil$, every radius-$r$ ball equals the entire space,
hence $|B_r(S)|=n+1$.

For the average, counting ordered pairs $(w_S,w_T)\in\{0,\dots,n\}^2$ with
$|w_S-w_T|\le r$ yields
\[
   \sum_{S\in\cS_{n,2}} |B_r(S)|
   =(n+1)(2r+1)-r(r+1),
\]
and division by $|\cS_{n,2}|=n+1$ gives the stated formula.
\end{proof}

\subsubsection{Ternary alphabet ($q=3$)}

Let $q=3$ with alphabet $\Sigma=\{0,1,2\}$.
Recall that a multiset $S\in\cS_{n,3}$ is represented by its multiplicities
\[
   \mathbf{x}_S=(x_0,x_1,x_2)\in\Z_{\ge 0}^3,
   \qquad
   x_0+x_1+x_2=n.
\]

\begin{lemma}[Ternary ball sizes]
\label{lem:q3-balls-summary}
Fix $n\ge 1$ and $r\ge 0$.

The minimal radius-$r$ ball size is attained at extreme centers and equals
\[
   \min_{S\in\cS_{n,3}} |B_r(S)|
   = \binom{r+2}{2}
   = \frac{(r+1)(r+2)}{2}.
\]

If $n\ge 3r$ and $\min_i x_i\ge r$, then
\[
   |B_r(S)|
   = 3r^2+3r+1.
\]
In particular, centers whose multiplicities are as equal as possible attain
the maximal ball size in this regime.

Finally, for fixed $r$ and $n\to\infty$,
\[
   \overline{B_r}(n,3)
   = (3r^2+3r+1) - O\!\left(\frac{1}{n}\right).
\]
\end{lemma}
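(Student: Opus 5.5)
The lemma is the case $q=3$ of results already proved, so the plan is to specialize each earlier statement in turn rather than count anything from scratch.

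\textbf{Minimal ball.} We invoke Lemma~\ref{lem:extreme-min-ball} with $q=3$. For $r<n$ it says that the extreme centers, i.e.\ permutations of $(n,0,0)$, minimize the ball size, with value $\binom{r+2}{2}=(r+1)(r+2)/2$. The lemma is stated for every $r\ge0$, so the range $r\ge n$ needs a separate look. There every ball is the whole space, of size $\binom{n+2}{2}$. This agrees with $\binom{r+2}{2}$ only when $r=n$. When $r>n$ the displayed value is really $\binom{\min(r,n)+2}{2}$, and we would note this explicitly, or read the statement under the standing convention $r<n$ of Subsection~\ref{subsec:min-balls}. This boundary bookkeeping is the only genuine obstacle; everything else is specialization.

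\textbf{Interior ball.} Assume $\min_i x_i\ge r$. Corollary~\ref{cor:interior-ball} gives $\Delta_r(S)=S_2(r,r)$, hence $|B_r(S)|=|S_2(r,r)|$, and Corollary~\ref{cor:avg-ball-specializations} gives $|S_2(r,r)|=3r^2+3r+1$. As a cross-check, the Kova\v{c}evi\'c--Tan formula gives
\[
\binom{r+2}{2}+4\binom{r+1}{2}+\binom{r}{2}=3r^2+3r+1 .
\]
For the maximality clause, Lemma~\ref{lem:Sr-containment} bounds every ball by $|S_2(r,r)|$, so interior centers are maximizers. If $n\ge3r$, a most balanced center has every coordinate at least $\lfloor n/3\rfloor\ge r$, so it is interior and attains the bound. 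This agrees with Lemma~\ref{lem:max-balanced}.

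\textbf{Average.} We invoke Lemma~\ref{lem:avg-ball-ideal} with $q=3$, together with $\sum_{m\le r}c_{3,m}=1+6\sum_{m=1}^r m=3r^2+3r+1$. Alternatively, from the explicit formula in Corollary~\ref{cor:avg-ball-specializations}, each ratio $\binom{n-m+2}{2}/\binom{n+2}{2}=1-O(m/n)$. Summing over the finitely many $m\le r$, and using that each ratio is at most $1$, gives the deficit $O(1/n)$ with the stated sign.
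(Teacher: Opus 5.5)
Your proposal is correct and follows the paper's proof, which likewise specializes Lemma~\ref{lem:extreme-min-ball}, the interior/ideal-set results, and Corollary~\ref{cor:avg-ball-specializations}; the one difference is that, for the interior value, the paper counts the $6t$ zero-sum vectors in each hexagonal shell directly instead of quoting $|S_2(r,r)|=3r^2+3r+1$. Your caveat about $r>n$ is well taken: there every ball is all of $\cS_{n,3}$, so the minimum is $\binom{n+2}{2}\neq\binom{r+2}{2}$, and the paper's proof passes over this by silently inheriting the hypothesis $r<n$ of Lemma~\ref{lem:extreme-min-ball}.
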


\begin{proof}
The minimal-ball formula is the specialization of
Lemma~\ref{lem:extreme-min-ball} to $q=3$.

Assume $n\ge 3r$ and $\min_i x_i\ge r$.
Recall from Lemma~\ref{lem:ball-size-general} that
\[
   B_r(S)
   =
   \{\mathbf{z}\in\Z^3 :
     \sum_i z_i=0,\;
     \sum_i z_i^+ \le r,\;
     \sum_i z_i^- \le r,\;
     \mathbf{z}^- \le \mathbf{x}_S
   \}.
\]
When $\min_i x_i\ge r$, the feasibility constraint
$\mathbf{z}^-\le \mathbf{x}_S$ is inactive for all vectors satisfying
$\sum_i z_i^-\le r$.
Hence the admissible difference vectors are exactly those satisfying
\[
   \sum_i z_i=0
   \qquad\text{and}\qquad
   \sum_i z_i^+=\sum_i z_i^-\le r.
\]

For $q=3$, these vectors form a discrete hexagonal region in the $A_2$ lattice.
More precisely, for $t\ge1$, the number of vectors $\mathbf{z}\in\Z^3$ with
$\sum_i z_i=0$ and $\sum_i z_i^+=\sum_i z_i^-=t$ equals $6t$.
Indeed, one chooses the unique negative coordinate and distributes $t$
between the two positive coordinates, yielding $3\cdot 2\cdot t=6t$ possibilities.

Summing over $t=0,\dots,r$ yields
\[
   |B_r(S)|
   = 1+\sum_{t=1}^r 6t
   = 3r^2+3r+1.
\]

Balanced centers satisfy $\min_i x_i\ge r$ whenever $n\ge 3r$, and therefore
attain this maximal value.

The asymptotic average-ball estimate also follows from
Corollary~\ref{cor:avg-ball-specializations} by observing that
\[
   \frac{1}{\binom{n+2}{2}}
   \left(
      \binom{n+2}{2}
      +
      6\sum_{m=1}^{r} m\binom{n-m+2}{2}
   \right)
   =
   3r^2+3r+1-O\!\left(\frac{1}{n}\right),
\]
which is consistent with Lemma~\ref{lem:avg-ball-ideal}.
\end{proof}

Finally, we note that while the binary case admits a fully explicit average
formula, the generating-function framework of
Theorem~\ref{thm:avg-ball-general} provides a unified approach to average ball
sizes for all $q$, and yields the asymptotic description above for $q=3$.


\section{Volume-Based Bounds for Multiset Deletion Codes}
\label{sec:vol-bounds}

In this section we derive general upper and lower bounds on the maximal
cardinality of multiset deletion-correcting codes.
Throughout, we consider the multiset space $\cS_{n,q}$ equipped with the
deletion distance
\[
   d(S,T)=n-|S\cap T|.
\]

All bounds in this section are expressed in terms of deletion balls, whose
structure and size were analyzed in Section~\ref{sec:balls}.
A key feature of the multiset model is that ball sizes depend strongly on
the choice of the center.
As a result, different bounding techniques rely on different extremal
notions of ball size:
\begin{itemize}
\item \emph{minimal} balls govern sphere-packing upper bounds;
\item \emph{maximal} balls motivate code--anticode style arguments in the
      difference-vector space, which are valid only in specific geometric regimes;
\item \emph{average} balls govern Gilbert--Varshamov-type lower bounds.
\end{itemize}

We treat these bounds in turn and clarify their scope and limitations.

\subsection{Sphere-packing upper bound}
\label{subsec:sp-bound}

Let $\cC\subseteq\cS_{n,q}$ be a multiset code with minimum distance
$d(\cC)=d$.
Set
\[
   r=\Bigl\lfloor\frac{d-1}{2}\Bigr\rfloor .
\]
By definition of the deletion distance, the radius-$r$ balls
$\{B_r(S): S\in\cC\}$ are pairwise disjoint.
Since all balls lie in the ambient space $\cS_{n,q}$, whose cardinality is
\[
   |\cS_{n,q}|=\binom{n+q-1}{q-1},
\]
we obtain the standard packing inequality
\[
   |\cC|\cdot \min_{S\in\cS_{n,q}} |B_r(S)|
   \;\le\;
   |\cS_{n,q}|.
\]

In contrast to classical Hamming spaces, ball sizes in $\cS_{n,q}$ depend on
the center.
Thus the relevant quantity is the \emph{minimal} radius-$r$ ball size.
Recall (Lemma~\ref{lem:extreme-min-ball}) that the smallest balls are attained
at extreme centers and satisfy
\[
   \min_{S\in\cS_{n,q}} |B_r(S)|
   =
   \binom{r+q-1}{q-1}.
\]

\begin{theorem}[Sphere-packing bound]
\label{thm:sp-bound}
For all $n,q$ and all $d\ge 1$,
\[
   |\cC|
   \;\le\;
   \frac{\binom{n+q-1}{q-1}}{\binom{r+q-1}{q-1}},
   \qquad
   r=\Bigl\lfloor\frac{d-1}{2}\Bigr\rfloor.
\]
Equivalently, for deletion-correction radius $t$ (i.e., $d(\cC)\ge 2t+1$),
\[
   S_q(n,t)
   \;\le\;
   \frac{\binom{n+q-1}{q-1}}{\binom{t+q-1}{q-1}}.
\]
\end{theorem}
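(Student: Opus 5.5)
The plan is the standard packing argument, with the minimal ball size supplied by Lemma~\ref{lem:extreme-min-ball}.

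\textbf{Disjointness.} Let $\cC$ have minimum distance $d$ and set $r=\lfloor (d-1)/2\rfloor$, so that $2r\le d-1$. Suppose $T\in B_r(S_1)\cap B_r(S_2)$ with $S_1\neq S_2$ in $\cC$. Since $d(\cdot,\cdot)=\frac12\|\mathbf{x}_\cdot-\mathbf{x}_\cdot\|_1$ is a metric, the triangle inequality gives
\[
   d(S_1,S_2)\le d(S_1,T)+d(T,S_2)\le 2r\le d-1,
\]
which contradicts $d(\cC)=d$. Hence the radius-$r$ balls around codewords are pairwise disjoint subsets of $\cS_{n,q}$.

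\textbf{Counting.} Disjointness gives
\[
   \sum_{S\in\cC}|B_r(S)|\le |\cS_{n,q}|=\binom{n+q-1}{q-1}.
\]
Each summand is at least the minimal ball size. If $r<n$, Lemma~\ref{lem:extreme-min-ball} gives $|B_r(S)|\ge\binom{r+q-1}{q-1}$. Dividing then yields the first inequality.

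\textbf{The case $r\ge n$.} Lemma~\ref{lem:extreme-min-ball} is stated only for $r<n$, so this case needs a separate argument. Here every ball is the whole space, so disjointness forces $|\cC|\le 1$. The right-hand side is at least $1$ because $\binom{n+q-1}{q-1}\le\binom{r+q-1}{q-1}$ fails only in the direction that makes the bound trivially true. More precisely, if $|\cC|\le1$, the bound $|\cC|\cdot\binom{r+q-1}{q-1}\le\binom{n+q-1}{q-1}$ could fail numerically when $r>n$. I therefore expect to note that whenever $|\cC|\ge2$ we have $d\le n$, hence $r<n$, so the lemma applies. The degenerate single-codeword case is the only point needing care. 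I would handle it either by restricting to $|\cC|\ge 2$ or by observing that for $r\ge n$ the code is trivial.

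\textbf{Second form.} A code corrects $t$ deletions in the packing sense when $d(\cC)\ge 2t+1$. In that case $r=\lfloor (d-1)/2\rfloor\ge t$, and $\binom{r+q-1}{q-1}$ is nondecreasing in $r$, so the bound with $t$ in place of $r$ follows. Taking the maximum over such codes gives the bound on $S_q(n,t)$.

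The main obstacle is only the boundary regime $r\ge n$, treated above. The rest follows directly from the triangle inequality and Lemma~\ref{lem:extreme-min-ball}.
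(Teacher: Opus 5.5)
Your proof is the paper's own argument, so it is correct. It uses pairwise disjointness of the radius-$r$ balls around codewords (which you justify explicitly via the triangle inequality), the minimal-ball value $\binom{r+q-1}{q-1}$ from Lemma~\ref{lem:extreme-min-ball}, and the $t$-form via $r\ge t$ and monotonicity.

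Your concern about $r\ge n$ is more careful than the paper, which invokes that lemma without checking $r<n$. The right fix is your restriction to $|\cC|\ge 2$, which forces $d\le n$ and hence $r<n$. This restriction is genuinely needed: for $t>n$ the $S_q(n,t)$ form is false, since a one-word code has size $1$ while the right-hand side is below $1$. So drop the sentence asserting that the right-hand side is at least $1$, and drop the ``code is trivial'' alternative, which cannot rescue the inequality.
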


\begin{proof}
The disjointness of radius-$r$ balls around codewords gives
\[
   |\cC|\cdot \min_{S\in\cS_{n,q}} |B_r(S)|
   \;\le\;
   |\cS_{n,q}|.
\]
By Lemma~\ref{lem:extreme-min-ball},
\[
   \min_{S\in\cS_{n,q}} |B_r(S)|
   =
   \binom{r+q-1}{q-1}.
\]
Substituting this identity yields the first inequality. The formulation in
terms of $S_q(n,t)$ is obtained by setting $r=t$.
\end{proof}

Fix $q$ and $r$, and let $n\to\infty$.
Using
\[
   \binom{n+q-1}{q-1}
   =
   \frac{n^{q-1}}{(q-1)!}\,(1+o(1)),
\]
we obtain the asymptotic form
\[
   |\cC|
   \;\le\;
   \frac{n^{q-1}}{(q-1)!\,\binom{r+q-1}{q-1}}\,(1+o(1)).
\]

\subsection{Code--anticode bounds and boundary truncation}
\label{subsec:anticode-bounds}

Another approach to upper bounding multiset deletion codes is based on the
geometry of large balls via difference vectors.
This viewpoint motivates code--anticode type arguments in the lattice
representation of the multiset space.

For $S,T\in\cS_{n,q}$ with multiplicity vectors
$\mathbf{x}_S=(x_0,\dots,x_{q-1})$ and $\mathbf{x}_T=(y_0,\dots,y_{q-1})$, define
the difference vector
\[
   \mathbf{z}=\mathbf{x}_T-\mathbf{x}_S\in\Z^q,
\]
and write $\mathbf{z}=\mathbf{z}^+-\mathbf{z}^-$ with
$z_i^+=\max\{z_i,0\}$ and $z_i^-=\max\{-z_i,0\}$.
Then
\[
   d(S,T)=\sum_i z_i^+=\sum_i z_i^-,
\]
and feasibility is equivalent to the componentwise constraint
$\mathbf{z}^-\le \mathbf{x}_S$.

For a fixed center $S$, the radius-$r$ ball can be written in difference-vector
coordinates as
\[
   \Delta_r(S)
   \;\eqdef\;
   \{\mathbf{x}_T-\mathbf{x}_S:\ T\in\cS_{n,q},\ d(S,T)\le r\},
\]
so that $|B_r(S)|=|\Delta_r(S)|$.
Recall from Section~\ref{sec:pre} the universal difference-vector set
$S_{q-1}(r,r)$. For every $S\in\cS_{n,q}$ one has
$\Delta_r(S)\subseteq S_{q-1}(r,r)$, and hence
\[
   |B_r(S)|\le |S_{q-1}(r,r)|.
\]
By Corollary~\ref{cor:Sr-explicit},
\[
   |S_{q-1}(r,r)|
   =
   \sum_{j=0}^{\min(r,q-1)}
      \binom{q-1}{j}^2
      \binom{r-j+q-1}{q-1}.
\]

In the infinite lattice $\Z^q$ equipped with the metric $\frac12\|\cdot\|_1$,
the set $S_{q-1}(r,r)$ behaves as a genuine anticode: any two of its elements
are at distance at most $2r$.
Consequently, if a code has minimum distance at least $2r+1$, then no translate
of $S_{q-1}(r,r)$ can contain more than one codeword.
This is the standard translate-packing intuition behind code--anticode bounds,
and it explains why one might expect a bound of the form
\[
   |\cC|\le \frac{|\cS_{n,q}|}{|S_{q-1}(r,r)|}.
\]

However, this intuition fails in the simplex $\cS_{n,q}$ because difference
vectors must satisfy the feasibility constraint $\mathbf{z}^-\le\mathbf{x}_S$.
Near the boundary of the simplex, many vectors in $S_{q-1}(r,r)$ are infeasible,
and the corresponding translates are truncated.
Different truncated translates may intersect inside the simplex, even though
the untruncated translates in the ambient lattice are disjoint.
As a result, the naive code--anticode bound above is not valid in general.

This phenomenon is illustrated by the following explicit example.
Let $q=3$, $n=6$, and $t=2$, so that codes correcting $t$ deletions have minimum
distance at least $2t+1=5$.
Consider the code
\[
   \cC=\{(0,0,6),\ (0,6,0),\ (6,0,0)\}\subseteq\cS_{6,3}.
\]
Any two distinct codewords have disjoint supports, hence $|S\cap T|=0$ and
$d(S,T)=6$.
Thus $\cC$ is a valid multiset $2$-deletion-correcting code with $|\cC|=3$.

For these parameters,
\[
   |\cS_{6,3}|=\binom{8}{2}=28,
   \qquad
   |S_{2}(2,2)|=19,
\]
in agreement with Corollary~\ref{cor:avg-ball-specializations}.
The naive anticode ratio $|\cS_{6,3}|/|S_{2}(2,2)|=28/19<2$ would incorrectly
exclude the existence of such a code.
The contradiction arises because all three codewords lie on the boundary of the
simplex, where the corresponding sets $\Delta_t(S)$ are heavily truncated.

Kova\v{c}evi\'c and Tan resolve this issue by explicitly accounting for boundary
effects.
In their framework, a code correcting $t$ deletions has minimum distance at least
$2t+1$, and they define
\[
   \beta(t,q)=|S_{q-1}(t,t)|.
\]
Their Theorem~14 yields the unconditional upper bound
\[
   S_q(n,t)
   \;\le\;
   \frac{\binom{n+q-1}{q-1}}{\beta(t,q)}
   \;+\;
   qt
   \sum_{j=1}^{qt}
      \binom{n+q-1-j}{q-2}.
\]

Applying this bound to the example above, we have $t=2$ and hence
$\beta(2,3)=|S_2(2,2)|=19$.
Moreover,
\[
   \sum_{j=1}^{6}\binom{8-j}{1}
   =7+6+5+4+3+2
   =27.
\]
The bound becomes
\[
   S_3(6,2)
   \;\le\;
   \frac{28}{19}+3\cdot 2\cdot 27
   \;=\;
   \frac{3106}{19},
\]
which is strictly larger than $|\cS_{6,3}|=28$.
This is entirely consistent: the bound is valid but vacuous, reflecting the fact
that in this finite regime the boundary term dominates.

In contrast, the sphere-packing bound is governed by the minimal radius-$2$ ball
size $\binom{4}{2}=6$ and yields
\[
   |\cC|\le \frac{28}{6}<5,
\]
which is nontrivial and compatible with the existence of a size-$3$ code.

This comparison shows that while $S_{q-1}(r,r)$ accurately captures the interior
geometry of large balls and is useful for asymptotic analysis, boundary
truncation can render anticode-based bounds weak or even vacuous in finite
dimensions.
In such cases, sphere-packing bounds based on minimal balls can be strictly
stronger.

\subsection{Gilbert--Varshamov lower bound via average balls}
\label{subsec:gv-bound}

We now derive a lower bound using a greedy packing argument.
Unlike the previous upper bounds, the Gilbert--Varshamov approach is governed
by \emph{average} balls rather than extremal ones.
This viewpoint is closely related to the generalized sphere-packing
principles introduced in~\cite{FazeliVardyYaakobi2014}, which emphasize
mass-distribution and average-ball arguments in non-homogeneous metric spaces.
It may also be viewed as a fixed-alphabet specialization of the broader
Gilbert--Varshamov perspective developed for $L_1$ spaces in
~\cite{GoyalDaoKovacevicKiah2025}.

Define the average radius-$t$ ball size by
\[
   \overline{B_t}(n,q)
   \;\eqdef\;
   \frac{1}{|\cS_{n,q}|}\sum_{S\in\cS_{n,q}} |B_t(S)|.
\]

\begin{theorem}[Generalized Gilbert--Varshamov bound]
\label{thm:gv-bound}
For all $n\ge 1$, $q\ge 2$, and $t\ge 0$,
\[
   S_q(n,t)
   \;\ge\;
   \frac{|\cS_{n,q}|}{\overline{B_t}(n,q)}
   \;=\;
   \frac{\binom{n+q-1}{q-1}}{\overline{B_t}(n,q)}.
\]
\end{theorem}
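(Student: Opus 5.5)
The plan is to use the Caro--Wei form of Tur\'an's theorem rather than the naive greedy argument. The naive greedy argument only gives a bound in terms of the \emph{maximal} ball $L_t$, not the average $\overline{B_t}(n,q)$.

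First, I fix the conflict structure. Recall from Section~\ref{sec:pre} that a code corrects $t$ deletions if and only if its minimum distance is at least $t+1$. Hence $\cC$ is $t$-deletion-correcting exactly when no two distinct codewords lie at distance at most $t$. Define the graph $\Gamma$ on vertex set $\cS_{n,q}$, joining distinct $S,T$ whenever $d(S,T)\le t$. Then $t$-deletion-correcting codes are precisely the independent sets of $\Gamma$. The degree of $S$ is $\deg(S)=|B_t(S)|-1$, since the ball contains its center. Therefore the average degree is $\overline{B_t}(n,q)-1$, and $S_q(n,t)=\alpha(\Gamma)$.

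Second, I invoke the Caro--Wei bound $\alpha(\Gamma)\ge\sum_{S}\frac{1}{\deg(S)+1}=\sum_S \frac{1}{|B_t(S)|}$. I would include its short proof to keep the argument self-contained. Take a uniformly random ordering of $\cS_{n,q}$ and keep each vertex that precedes all of its neighbours. The kept set is independent. Vertex $S$ is kept with probability $1/(\deg(S)+1)$, so by linearity of expectation some ordering yields an independent set of at least the stated size. Alternatively, the greedy rule ``pick a vertex of minimum degree and delete its closed neighbourhood'' gives the same bound by a standard charging argument.

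Third, I apply convexity of $x\mapsto 1/x$ on $(0,\infty)$. By Jensen's inequality (equivalently, Cauchy--Schwarz or the AM--HM inequality),
\[
\sum_{S\in\cS_{n,q}}\frac{1}{|B_t(S)|}\;\ge\;\frac{|\cS_{n,q}|^2}{\sum_{S}|B_t(S)|}=\frac{|\cS_{n,q}|}{\overline{B_t}(n,q)}.
\]
Substituting $|\cS_{n,q}|=\binom{n+q-1}{q-1}$ yields the stated inequality. Theorem~\ref{thm:avg-ball-general} then makes the right-hand side explicit, although it is not needed for the inequality itself.

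The main obstacle is the second step, namely passing from extremal to averaged ball sizes. A greedy argument that picks codewords in arbitrary order only removes at most $L_t$ points per step, so it cannot produce the average. The random-ordering (Caro--Wei) device is what handles the non-homogeneity of $\cS_{n,q}$. The only other point needing care is the radius convention: the conflict radius must be $t$, matching minimum distance $t+1$. If one instead used the $2t+1$ convention of Section~\ref{subsec:sp-bound}, the same argument would go through with $B_{2t}$ in place of $B_t$.
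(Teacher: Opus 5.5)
Your proof is correct. The paper gives no written proof of Theorem~\ref{thm:gv-bound}. It only announces ``a greedy packing argument'' and points to the average-ball viewpoint of Fazeli--Vardy--Yaakobi.

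Read literally, an arbitrary-order greedy packing removes at most $L_t$ points per step. It therefore yields only $S_q(n,t)\ge|\cS_{n,q}|/L_t$, which is weaker than the stated bound whenever ball sizes vary.

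Your argument supplies the step that the paper leaves implicit:
\begin{itemize}
\item The random-ordering proof of Caro--Wei, or the minimum-degree greedy with its charging argument, gives $\alpha(\Gamma)\ge\sum_S 1/|B_t(S)|$.
\item The AM--HM (Cauchy--Schwarz) step converts this into $|\cS_{n,q}|/\overline{B_t}(n,q)$.
\end{itemize}
This is the classical observation (Tolhuizen) that the generalized Gilbert--Varshamov bound follows from Tur\'an's theorem. Your route also gives the slightly stronger intermediate bound $\sum_S 1/|B_t(S)|$ for free.

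Your attention to the radius convention is well placed. The statement holds because $t$-deletion correction means minimum distance $t+1$ (Section~\ref{sec:pre}), so the conflict radius is $t$. Under the $2t+1$ convention used in Section~\ref{subsec:sp-bound}, the bound with $\overline{B_t}$ can fail. For example, take $q=2$, $n=2$, $t=1$:
\begin{itemize}
\item $\overline{B_1}(2,2)=7/3$, so the bound gives $|\cS_{2,2}|/\overline{B_1}(2,2)=9/7>1$;
\item but only one point of $\{0,1,2\}$ can have pairwise distance at least $3$.
\end{itemize}
So, as you say, the correct replacement in that convention is $\overline{B_{2t}}$.
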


By Theorem~\ref{thm:avg-ball-general}, the denominator is now available in
closed form:
\[
   \overline{B_t}(n,q)
   =
   \frac{1}{\binom{n+q-1}{q-1}}
   \sum_{m=0}^{t}
   c_{q,m}\binom{n-m+q-1}{q-1}.
\]
Hence the Gilbert--Varshamov bound can be written exactly as
\[
   S_q(n,t)
   \;\ge\;
   \frac{\binom{n+q-1}{q-1}^2}
        {\sum_{m=0}^{t} c_{q,m}\binom{n-m+q-1}{q-1}}.
\]

For fixed $q$ and $t$, Lemma~\ref{lem:avg-ball-ideal} gives the asymptotic form
\[
   \overline{B_t}(n,q)
   =
   |S_{q-1}(t,t)| - O\!\left(\frac{1}{n}\right),
   \qquad n\to\infty,
\]
and therefore
\[
   S_q(n,t)
   \;\ge\;
   \frac{\binom{n+q-1}{q-1}}{|S_{q-1}(t,t)|}\,(1-o(1)),
   \qquad n\to\infty.
\]

\medskip
It is instructive to compare this lower bound with known upper bounds.
Sphere-packing bounds are governed by the minimal radius-$t$ ball size,
which equals $\binom{t+q-1}{q-1}$, whereas the Gilbert--Varshamov bound is
governed by the average ball size.
This distinction mirrors the dichotomy highlighted in the generalized
sphere-packing framework of~\cite{FazeliVardyYaakobi2014} between extremal
and average geometric behavior.

For $q=2$, these quantities coincide up to constant factors, yielding
asymptotically matching upper and lower bounds.
For $q\ge3$, however, they differ by a constant multiplicative factor,
so a non-vanishing gap remains between the best known upper and lower bounds.

\medskip
To summarize, sphere-packing bounds reflect the geometry of minimal balls;
upper bounds based on difference vectors capture the structure of large balls
but are sensitive to boundary truncation; and Gilbert--Varshamov bounds are
governed by average ball sizes, for which exact finite-$n$ formulas are now
available and match the interior-difference-set scale asymptotically for fixed $q$ and $t$.


\section{Conclusion}
\label{sec:conclusion}

We developed two complementary perspectives on multiset deletion-correcting
codes. On the constructive side, we presented polynomial Sidon-type
constructions over finite fields, showing that in the regime $t<q$ one obtains
explicit multiset deletion-correcting codes with redundancy $t+O(1)$,
independent of the blocklength $n$. On the geometric side, we analyzed deletion
balls in the multiset simplex via difference vectors and a diagonal reduction of
the relevant generating functions, proved a global characterization of the
centers minimizing and maximizing deletion-ball size, and obtained exact
formulas for the pair enumerator \(A_{n,q}(m)\) and for the average ball size,
including closed forms in the binary and ternary cases.

These geometric results translate directly into volume-based bounds. Minimal
balls yield sphere-packing upper bounds; the ideal set $S_{q-1}(r,r)$ clarifies
both the promise and the limitations of code--anticode arguments under boundary
truncation; and exact average balls lead to Gilbert--Varshamov lower bounds that
match the interior-difference-set scale asymptotically for fixed $q$ and $t$.

Together with our previous work~\cite{CompanionPaper2026}, this paper provides
both explicit constructions and a geometric understanding of the multiset
deletion metric. Several directions remain open. It would be particularly
interesting to sharpen the gap between sphere-packing and average-ball bounds
for $q\ge 3$, to determine whether the polynomial approach can be extended beyond
the regime $t<q$, and to better understand when cyclic and polynomial
Sidon-based constructions are genuinely optimal in the multiset setting.

\end{document}